\renewcommand{\text}{\textnormal}
\newenvironment{remark}{\vskip 0.1cm \noindent\textit{Remark}.}{\vskip 0.1cm}
\newtheorem{theorem}{Theorem}
\newtheorem{corollary}[theorem]{Corollary}
\newcommand{\ave}{\text{ave}}
\newcommand{\B}{\mathcal{B}}
\newcommand{\des}{\text{des}}
\newcommand{\I}{\mathcal{I}}
\newcommand{\ie}{i.e.}
\newcommand{\ind}[1]{{\chi}_{#1}}
\newcommand{\Leb}{\mathcal{L}}
\newcommand{\R}{\mathbb{R}}
\newcommand{\V}{\mathcal{V}_\text{adm}}
\newcommand{\Z}{\mathbb{Z}}
\newcommand{\lint}{\int\limits}
\DeclareMathOperator*{\supp}{supp}
\DeclareMathOperator*{\esssup}{ess\,sup}
\title{Multiscale modeling of granular flows with application to crowd dynamics}
\author{Emiliano Cristiani\thanks{CEMSAC, University of Salerno, Salerno, Italy and
        Istituto per le Applicazioni del Calcolo ``M. Picone'', CNR, Roma, Italy
        ({\tt emiliano.cristiani@gmail.com}).}
		\and Benedetto Piccoli\thanks{Istituto per le Applicazioni del Calcolo ``M. Picone'',
		CNR, Roma, Italy ({\tt b.piccoli@iac.cnr.it}).}
		\and Andrea Tosin\thanks{Department of Mathematics, Politecnico di Torino, Torino, Italy
		({\tt andrea.tosin@polito.it}).}
		}
\date{}
\begin{document}

\maketitle

\begin{abstract}
In this paper a new multiscale modeling technique is proposed. It relies on a recently introduced measure-theoretic approach, which allows to manage the microscopic and the macroscopic scale under a unique framework. In the resulting coupled model the two scales coexist and share information. This allows to perform numerical simulations in which the trajectories and the density of the particles affect each other. Crowd dynamics is the motivating application throughout the paper.
\end{abstract}

%\begin{keywords}
%Measure theory, coupled microscopic-macroscopic approach, pedestrians, self-organization
%\end{keywords}

%\begin{AMS}
%35L65, 91D10
%\end{AMS}

\pagestyle{myheadings}
\thispagestyle{plain}
\markboth{E. CRISTIANI, B. PICCOLI, AND A. TOSIN}{MULTISCALE MODELING OF GRANULAR FLOWS}

\section{Introduction}
\label{sect.intro}
Modeling group dynamics of living systems, such as groups of animals or human crowds, is a difficult task because one can only partially rely on the well-established theories of classical mechanics. Mathematical models must take into account several features of living matter: for example, individuals are not passively dragged by external forces, instead they have a decision-based dynamics; they experience nonlocal interactions, since they are able to see even far group mates and make decisions consequently; interactions can be metric (\ie, with group mates less than a threshold apart) or topological (\ie, with a fixed number of group mates no matter how far they are) \cite{ballerini2008ira}; interactions are strongly anisotropic because the subjects have a limited visual field, and mechanisms for collision avoidance are expected to be mainly directed toward group mates in front \cite{cristiani2010eai}; individuals are different from each other, each of them having for instance her/his own goal, reaction time, and maximal velocity. 

One of the most interesting consequences of these characteristics is the emergence of self-organization. Individuals can deploy themselves to give rise to apparently ordered and coordinated configurations or patterns \cite{KR02}. We cite, for example, clusters by starlings \cite{ballerini2008ira}, lines by elephants, penguins, and lobsters, V-like formations by geese, lanes by pedestrians \cite{helbing2001sop,hoogendoorn2005sop}. Actually such group configurations are not the result of a common decision made by the individuals or by a leader. Instead, they stem from simple rules followed by each individual, which takes into account the position/velocity of a few group mates. It is then possible that a single individual does not even perceive the global structure of the group it is part of.

In this paper we focus on crowd dynamics, which in the last few years has been the object of many mathematical models. In the \textit{microscopic} (\ie, agent-based) approach pedestrians are considered individually. Models usually consist of a (large) system of ordinary differential equations, each of which describes the behavior of a single pedestrian \cite{helbing1995sfm,helbing2001sop,hoogendoorn2003spf,MR2556188,maury2008mfc}.  In the \textit{macroscopic} approach pedestrians are instead described by means of their average density, which in most models obeys conservation or balance laws \cite{MR2438218,MR2158218,MR2438214,henderson1974fmh,hughes2002ctf}.

It is not fair to state that either approach is better for whatever problem. Rather, it is clear that a microscopic approach is advantageous when one wants to model differences among the individuals, random disturbance, or small environments. Moreover, it is the only reliable approach when one wants to track exactly the position of a few walkers. On the other hand, it may not be convenient to use a microscopic approach to model pedestrian flow in large environments, due to the high computational effort required. A macroscopic approach may be preferable to address optimization problems and analytical issues, as well as to handle experimental data. Nonetheless, despite self-organization phenomena are often visible only in large crowds \cite{helbing2007qab}, they are a consequence of strategical behaviors developed by individual pedestrians.

In \cite{cristiani2010mso,piccoli2010tem,piccoli2009pfb} we have extensively analyzed a measure-based modeling framework able to describe group behavior at both the microscopic and the macroscopic scale. The key point is the reinterpretation of the classical conservation laws in terms of abstract mass measures, which are then specialized to singular Dirac measures for microscopic models and to absolutely continuous measures (w.r.t. Lebesgue, \ie, the volume measure) for macroscopic models. We have shown \cite{cristiani2010mso} that the two scales may reproduce the same features of the group behavior, thus providing a perfect matching between the results of the simulations for the microscopic and the macroscopic model in some test cases. This motivated the multiscale approach that we propose here. Such an approach allows to keep a macroscopic view without losing the right amount of ``granularity'', which is crucial for the emergence of some self-organized patterns. Furthermore, the proposed method allows to introduce in a macroscopic (averaged) context some microscopic effects, such as random disturbances or differences among the individuals, in a fully justifiable manner from both the physical and the mathematical perspective. In the model that we propose, microscopic and macroscopic scales coexist and continuously share information on the overall dynamics. More precisely, the microscopic and the macroscopic part of the model track the trajectories of single pedestrians and the density of pedestrians, respectively, using the same evolution equation duly interpreted in the sense of measures. In this respect, the two scales are indivisible. This makes the difference from other ways of understanding multiscale approaches in the literature. For example, in \cite{quarteroni2003agm} a multiscale geometric technique is used to represent the circulatory system: one specific part of the network is accurately modeled in three dimensions, whereas the rest is described by means of lumped zero-dimensional models. This enables one to account for the whole circulatory network while keeping the complexity of the model under control. Multiscale methods can be implemented also at a numerical level in connection with domain decomposition (see e.g., \cite{donev2010hpc,weimar2001cmm}), in order to compute the solution to a certain equation with different local accuracy. The general idea is to couple accurate but expensive calculations, performed by a microscopic (e.g., particle-based) solver in small and inhomogeneous regions, with less accurate but also less expensive ones, performed by a macroscopic (e.g., continuum) solver in large and homogeneous regions. The two solvers usually exchange information at the interface of the respective regions. Another possibility (see e.g., \cite{kraft1997emc}) is to alternate the two scales for computing on the same system. In the resulting iterative algorithm, the output of the microscopic simulation is used as input for the macroscopic simulation and \textit{vice versa}. A further way to understand the multiscale approach is through upscaling procedures. In this case, the ultimate goal is to pass from a detailed but often inhomogeneous description of some quantities to a rougher but more homogeneous representation, by averaging out inhomogeneities via homogenization techniques (see e.g., \cite{bresch2010rie}).

It is worth pointing out that the dichotomy fine vs. coarse scale does not necessarily imply a parallel dichotomy ODE vs. PDE modeling. In other words, it is possible that the underlying mathematical models pertain to the continuum theory at both scales (like in most of the examples recalled above) or that the multiscale coupling between a discrete and a continuous model is realized only at an approximate computational level by averaging and sampling. Conversely, in the multiscale approach we propose here, the microscopic scale is actually a discrete one which complements the continuous flow with granularity. The resulting model is then a coupled microscopic-macroscopic one, and computational schemes are derived accordingly.

The paper is organized as follows. Section \ref{sect.time.evol.meas} introduces the measure-theoretic framework and models pedestrian kinematics. Section \ref{sect.multiscale} details the multiscale approach, addressing in particular the choice of microscopic and macroscopic parameters and their scaling. Section \ref{sect.discr.time} introduces and qualitatively analyzes a discrete-in-time counterpart of the multiscale model. Section \ref{sect.num.approx} proposes a numerical approximation of the equations, with special emphasis on the discretization in space of the macroscopic scale, and explains in detail the resulting numerical algorithm. Section \ref{sect.num.res} discusses the results of numerical simulations in some case studies aimed at checking the effects of the multiscale coupling on the crowd dynamics predicted by the model. Section \ref{sect.conclusions} finally draws conclusions and briefly sketches research perspectives.

\section{Mathematical modeling by time-evolving measures}
\label{sect.time.evol.meas}
From the mathematical point of view the mass of a $d$-dimensional system ($d=1,\,2,\,3$ for physical purposes) at time $t$ is a Radon positive measure $\mu_t$, that we assume to be defined on the Borel $\sigma$-algebra $\B(\R^d)$. For any $E\in\B(\R^d)$ the number $\mu_t(E)\geq 0$ gives the mass of pedestrians contained in $E$ at time $t\geq 0$. In principle, the only further property satisfied by $\mu_t$ is the $\sigma$-additivity, directly translating the principle of additivity of the mass.

Let $T>0$ denote a certain final time. Following \cite{canuto2008eaa}, the conservation of the mass transported by a velocity field $v=v(t,\,x):[0,\,T]\times\R^d\to\R^d$ is expressed by the equation
\begin{equation}
	\frac{\partial\mu_t}{\partial t}+\nabla\cdot{(\mu_t v)}=0, \quad (x,\,t)\in\R^d\times(0,\,T],
	\label{eq.cons.mass.strong}
\end{equation}
along with some given initial distribution of mass $\mu_0$ (initial condition). Derivatives appearing in Eq. \eqref{eq.cons.mass.strong} are meant in the functional sense of measures. Specifically, for every smooth test function $\phi$ with compact support, \ie, $\phi\in C^1_0(\R^d)$, and for a.e. $t\in[0,\,T]$, it results
\begin{equation}
	\frac{d}{dt}\lint_{\R^d}\phi(x)\,d\mu_t(x)=\lint_{\R^d}v(t,\,x)\cdot\nabla{\phi}(x)\,d\mu_t(x),
	\label{eq.cons.mass.weak}
\end{equation}
where integration-by-parts has been used at the right-hand side. A sufficient condition for Eq. \eqref{eq.cons.mass.weak} to be well-defined is that $v(t,\,\cdot)$ is integrable w.r.t. $\mu_t$ for a.e. $t\in[0,\,T]$.

A family of time-evolving measures $\{\mu_t\}_{t>0}$ is said to be a (weak) solution to Eq. \eqref{eq.cons.mass.strong} if, for all $\phi\in C^1_0(\R^d)$, the mapping $t\mapsto\int_{\R^d}\phi(x)\,d\mu_t(x)$
is absolutely continuous and satisfies Eq. \eqref{eq.cons.mass.weak}. In particular, the latter statement means
\begin{equation}
	\lint_{\R^d}\phi(x)d\mu_{t_2}(x)-\lint_{\R^d}\phi(x)d\mu_{t_1}(x)=
		\lint_{t_1}^{t_2}\lint_{\R^d}v(t,\,x)\cdot\nabla{\phi}(x)\,d\mu_t(x)\,dt
	\label{eq.cons.mass.weak.solution}
\end{equation}
for all $t_1,\,t_2\in[0,\,T]$, $t_1\leq t_2$, and all $\phi\in C^1_0(\R^d)$.

\subsection*{Modeling the interactions among pedestrians}
Equation \eqref{eq.cons.mass.strong} provides the evolution of the measure $\mu_t$ as long as the velocity is specified. In our case, given the absence of a balance of linear momentum, this implies modeling directly the field $v$. For this reason, our approach will result in a \emph{first-order model}.

First-order models are quite common in the literature, especially at the macroscopic scale. The velocity can be either specified as a known function \cite{maury2010mcm} or linked to the density of pedestrians by means of empirical fundamental relations $v=v(\rho)$ \cite{colombo2009ens,hughes2002ctf,venuti2007imp}. Sometimes a functional dependence on the density gradient is envisaged, in order to model the sensitivity of pedestrians to the variations of the surrounding density field \cite{MR2438218,MR2438214}. Microscopic models focus instead more closely on the interactions among pedestrians, normally expressing them in terms of generalized forces. They resort therefore to a classical Newtonian paradigm, in which the acceleration is modeled explicitly \cite{helbing2001trs,helbing1995sfm}. We remark, however, that in \cite{maury2007hcc,maury2008mfc} the authors adopt a kinematic modeling of the interactions in the frame of a microscopic model.

With the aim of setting up a model based on the mass conservation only, but in which the microscopic granularity complements the macroscopic dynamics, we cannot entirely resort either to generalized forces or to fundamental relations. Taking advantage of the mass conservation equation in the form \eqref{eq.cons.mass.weak}, which does not assume \emph{a priori} any modeling scale, our approach will be at the same time kinematic, macroscopic, and focused on the strategy developed by pedestrians at the microscopic scale.

To be more specific, let the velocity be expressed in the following form:
\begin{equation}
	v(t,\,x):=v[\mu_t](x)=v_{\textrm{des}}(x)+\nu[\mu_t](x),
	\label{eq.v}
\end{equation}
the square brackets denoting functional dependence on the measure $\mu_t$.

The function $v_{\textrm{des}}:\R^d\to\R^d$ is the \emph{desired velocity}, \ie, the velocity that pedestrians would set to reach their destination if they did not experience mutual interactions. In the simplest case it is a constant field, whereas in more complicated situations it accounts for the presence of possible obstacles to be bypassed (e.g., pedestrians walking in built environments). In our approach the desired velocity is deduced \emph{a priori} from the geometry of the domain, meaning that it is totally independent of the measure $\mu_t$. In other words, it can be regarded to all purposes as a datum of the problem. It is not restrictive to assume that it has constant modulus:
\begin{equation}
	\vert v_{\textrm{des}}(x)\vert=V, \quad \forall\,x\in\R^d,
	\label{eq.des.speed}
\end{equation}
where $V$ represents some characteristic speed of the walkers. We refer the reader to \cite{piccoli2009pfb} for a possible method to construct $v_{\textrm{des}}$.

The function $\nu[\mu_t]:\R^d\to\R^d$ is the \emph{interaction velocity}, that is, the correction that pedestrians make to their desired velocity in consequence of the interactions. The \emph{non-locality} of the interactions is introduced in this framework by deriving $\nu[\mu_t]$ from a synthesis of the information on the crowd distribution around each pedestrian. Specifically, we assume
\begin{equation}
	\nu[\mu_t](x)=\lint_{\R^d\setminus\{x\}}f(\vert y-x\vert)g(\alpha_{xy})\frac{y-x}{\vert y-x\vert}\,d\mu_t(y),
	\label{eq.nu}
\end{equation}
where:
\begin{itemize}
\item $f:\R_+\to\R$ is a function with compact support describing how the walker in $x$ interacts with her/his neighbors on the basis of their distance. If $\supp{f}=[0,\,R]$ for some $R>0$, then a \emph{neighborhood of interaction} is defined for the point $x$ coinciding with the ball $B_R(x)\subset\R^d$ centered in $x$ with radius $R$;
\item $\alpha_{xy}\in[-\pi,\,\pi]$ is the angle between the vectors $y-x$ and $v_{\textrm{des}}(x)$, that is, the angle under which a point $y$ is seen from $x$ with respect to the desired direction of motion;
\item $g:[-\pi,\,\pi]\to[0,\,1]$ is a function which reproduces the angular focus of the walker in $x$.
\end{itemize}
Integration w.r.t. $\mu_t$ accounts for the mass that the walkers see, considering that two fundamental attitudes characterize pedestrian behavior:
\begin{itemize}
\item \emph{repulsion}, \ie, the tendency to avoid collisions and crowded areas;
\item \emph{attraction}, \ie, the tendency, under some circumstances, to not lose the contact with other group mates  (e.g., groups of tourists in guided tours, groups of people sharing specific relationships such as families or parties).
\end{itemize}

Focusing on one of the simplest choices, nonetheless physiologically sound, we suggest for $f$ the following expression:
\begin{equation}
	f(s)=-\frac{F_r}{s}\ind{[0,\,R_r]}(s)+F_as\ind{[0,\,R_a]}(s),
	\label{eq.f}
\end{equation}
where $F_r,\,F_a>0$ are repulsion and attraction strengths, and $R_r,\,R_a>0$ are repulsion and attraction radii. This form of $f$ translates the basic idea that repulsion and attraction are inversely and directly proportional, respectively, to the distance separating the interacting pedestrians.

As pointed out in the Introduction, interactions can be either metric or topological. An interaction is \emph{metric} if the corresponding radius is fixed, so that each walker interacts with all other pedestrians within that given maximum distance. Conversely, an interaction is \emph{topological} if the corresponding radius is adjusted dynamically by each walker, in such a way that the neighborhood of interaction encompasses a predefined mass of other pedestrians s/he feels comfortable to interact with. In this paper we will be mainly concerned with metric interactions, for both repulsion and attraction. The interested reader is referred to \cite{ballerini2008ira,cristiani2010mso}, and references therein, for a detailed discussion of metric and topological effects, also by means of examples and numerical simulations.

The function $g$ carries the \emph{anisotropy} of the interactions, which essentially consists in that pedestrians cannot see all around them and they are not equally sensitive to external stimuli coming from different directions. If $\bar{\alpha}\in[0,\,\pi]$ is the maximum sensitivity angular width, a very simple form of $g$ is
\begin{equation}
	g(s)=\ind{\{\vert s\vert\leq\bar{\alpha}\}}(s), \quad s\in [-\pi,\,\pi].
	\label{eq.g}
\end{equation}
By mollifying this function it is possible to account for the visual fading that usually occurs laterally in the visual field when approaching the maximum angular width\footnote{In order to differentiate also the maximum angular widths of repulsion and attraction \cite{cristiani2010eai}, one may generalize Eq. \eqref{eq.nu} by replacing the product of $f$ and $g$ with a function of two variables accounting simultaneously for $\vert y-x\vert$ and $\alpha_{xy}$.}.

\section{The multiscale approach}
\label{sect.multiscale}
The framework presented in Section \ref{sect.time.evol.meas} is suitable to obtain, as particular cases, models at both the microscopic and the macroscopic scale. In this section we first briefly review the methodology for their individual derivation, already proposed in \cite{cristiani2010mso} to study microscopic and macroscopic self-organization in animal groups and crowds. Then, exploiting the tools offered by the measure-theoretic setting, we merge these concepts into a unique multiscale model, in which the microscopic and the macroscopic dynamics coexist.

\subsection{Microscopic models}
Let us consider a population of $N$ pedestrians, whose positions at time $t$ are denoted $\{P_j(t)\}_{j=1}^N$. In this case the mass of a set $E\in\B(\R^d)$ is the number of pedestrians contained in $E$, that is:
\begin{equation*}
	\mu_t(E)=\operatorname{card}\{P_j(t)\in E\},
\end{equation*}
hence $\mu_t$ is the \emph{counting measure}. We represent it as a sum of Dirac masses, each centered in one of the $P_j$'s:
\begin{equation*}
	\mu_t=\sum_{j=1}^{N}\delta_{P_j(t)}.
\end{equation*}
Plugging this in Eq. \eqref{eq.cons.mass.weak} gives
\begin{equation}
	\frac{d}{dt}\sum_{j=1}^{N}\phi(P_j(t))=\sum_{j=1}^{N}v(t,\,P_j(t))\cdot\nabla{\phi}(P_j(t)), \quad
		\forall\,\phi\in C^1_0(\R^d),
	\label{eq.micro.weak}
\end{equation}
whence, taking the time derivative at the left-hand side and rearranging the terms,
\begin{equation*}
	\sum_{j=1}^{N}\left[\dot{P}_j(t)-v(t,\,P_j(t))\right]\cdot\nabla{\phi}(P_j(t))=0,
\end{equation*}
the dot over $P_j$ standing for derivative w.r.t. $t$. The arbitrariness of $\phi$ implies
\begin{equation}
	\dot{P}_j(t)=v[\mu_t](P_j(t)), \quad j=1,\,\dots,\,N,
	\label{eq.mod.micro}
\end{equation}
where we have set $v(t,\,P_j(t))=v[\mu_t](P_j(t))$ according to Eq. \eqref{eq.v}, therefore the microscopic model specializes in a dynamical system of $N$ coupled ODEs for the $P_j$'s. The coupling is realized by the measure $\mu_t$ in the velocity field. In particular, the microscopic counterpart of Eq. \eqref{eq.nu} reads
\begin{equation*}
	\nu[\mu_t](P_j)=\sum_{\substack{k=1,\,\dots,\,N\\ P_k\ne P_j}}f(\vert P_k-P_j\vert)g(\alpha_{kj})
		\frac{P_k-P_j}{\vert P_k-P_j\vert},
\end{equation*}
where $\alpha_{kj}\in[-\pi,\,\pi]$ is shorthand for the angle formed by the vectors $P_k-P_j$ and $v_\des(P_j)$. We point out that, with the function $f$ given by Eq. \eqref{eq.f}, the statement $P_k\ne P_j$ in the above formula can be converted into the milder one $k\ne j$. Indeed one can prove that if the $P_j$'s are initially all distinct they remain distinct at all successive times $t>0$ (see \cite{cristiani2010eai} for technical details).

\subsection{Macroscopic models}
Macroscopic models are based on the assumption that the matter is continuous, thus the measure $\mu_t$ is absolutely continuous w.r.t. the $d$-dimensional Lebesgue measure $\Leb^d$, $\mu_t\ll\Leb^d$. Radon-Nikodym's Theorem asserts that there exists a function $\rho(t,\,\cdot)\in L^1_{\text{loc}}(\R^d)$ such that
\begin{equation}
	d\mu_t=\rho(t,\,\cdot)\,d\Leb^d, \quad \rho(t,\,\cdot)\geq 0\ \text{a.e.},
	\label{eq.mu.macro}
\end{equation}
called the \emph{density} of $\mu_t$ w.r.t. $\Leb^d$. In our context $\rho(t,\,x)$ represents the density of pedestrians at time $t$ in the point $x$.

Using $\rho$, the mass conservation equation \eqref{eq.cons.mass.weak} rewrites as
\begin{equation}
	\frac{d}{dt}\lint_{\R^d}\rho(t,\,x)\phi(x)\,dx=\lint_{\R^d}\rho(t,\,x)v(t,\,x)\cdot\nabla{\phi(x)}\,dx,
		\quad\forall\,\phi\in C^1_0(\R^d),
	\label{eq.macro.weak}
\end{equation}
namely a weak form of the continuity equation
\begin{equation}
	\frac{\partial\rho}{\partial t}+\nabla\cdot{(\rho v)}=0.
\label{eq.continuity.equation}
\end{equation}
The interaction velocity specializes as
\begin{equation*}
	\nu[\mu_t](x)=\lint_{\R^d}f(\vert y-x\vert)g(\alpha_{xy})\frac{y-x}{\vert y-x\vert}\rho(t,\,y)\,dy
\end{equation*}
where it should be noticed that the domain of integration may now indifferently include or not the point $x$ because $\{x\}$ is a Lebesgue-negligible set.

\subsection{Multiscale models}
If the measure $\mu_t$ is neither purely atomic nor entirely absolutely continuous w.r.t. $\Leb^d$ but includes both parts, we get models that incorporate the microscopic granularity of pedestrians in the macroscopic description of the crowd flow. More specifically, we consider
\begin{equation}
	\mu_t=\theta m_t+(1-\theta)M_t,
	\label{eq.mu.multi}
\end{equation}
where
\begin{equation*}
	m_t=\sum_{j=1}^N\delta_{P_j(t)}, \quad d M_t(x)=\rho(t,\,x)\,dx
\end{equation*}
are the microscopic and the macroscopic mass, respectively. The parameter $\theta\in[0,\,1]$ weights the coupling between the two scales, from $\theta=0$ corresponding to a purely macroscopic model to $\theta=1$ corresponding to a purely microscopic model. In Eq. \eqref{eq.mu.multi} no scaling parameters explicitly appear, but we anticipate that they will arise naturally from our next dimensional analysis (cf. Section \ref{sect.dim.anal}).

Using the measure \eqref{eq.mu.multi}, the mass conservation equation \eqref{eq.cons.mass.weak} takes the form of a mix of microscopic and macroscopic contributions:
\begin{align*}
	\frac{d}{dt}\Biggl(&\theta\sum_{j=1}^{N}\phi(P_j(t))+(1-\theta)\lint_{\R^d}\rho(t,\,x)\phi(x)\,dx\Biggr)=\\
		&\!\!\theta\sum_{j=1}^{N}v(t,\,P_j(t))\cdot\nabla{\phi}(P_j(t))+
			(1-\theta)\lint_{\R^d}\rho(t,\,x)v(t,\,x)\cdot\nabla{\phi}(x)\,dx,
				\quad\forall\,\phi\in C^1_0(\R^d),
\end{align*}
formally a convex linear combination of Eqs. \eqref{eq.micro.weak}, \eqref{eq.macro.weak}. The interaction velocity $\nu[\mu_t]$ is now given by
\begin{align*}
	\nu[\mu_t](x)&=\theta\sum_{\substack{k=1,\,\dots,\,N\\ P_k(t)\ne x}}f(\vert P_k(t)-x\vert)g(\alpha_{xP_k(t)})
		\frac{P_k(t)-x}{\vert P_k(t)-x\vert}\\
	&\phantom{=}+(1-\theta)\lint_{\R^d}f(\vert y-x\vert)g(\alpha_{xy})\frac{y-x}{\vert y-x\vert}
		\rho(t,\,y)\,dy,
\end{align*}
therefore it coincides neither with the fully microscopic nor with the fully macroscopic one. This definitely makes the overall dynamics not a simple superposition of the individual microscopic and macroscopic dynamics.

It is worth noticing that the point $x$ may or may not be one of the positions of the microscopic pedestrians. Computing $\nu[\mu_t]$ for $x=P_j(t)$ shows that the interaction velocity of the $j$-th pedestrian accounts not only for other microscopic pedestrians contained in the neighborhood of interaction but also for the macroscopic density distributed therein, which represents some crowd whose subjects are not individually modeled. Specifically, the term responsible for this is
\begin{equation}
	\lint_{\R^d}f(\vert y-P_j(t)\vert)g(\alpha_{P_j(t)y})\frac{y-P_j(t)}{\vert y-P_j(t)\vert}\rho(t,\,y)\,dy,
	\label{eq.nu.macro-for-micro}
\end{equation}
that we may regard as the macroscopic contribution to the microscopic dynamics. Analogously, computing $\nu[\mu_t]$ for $x$ different from all of the $P_j$'s shows that the interaction velocity of an infinitesimal reference volume centered in $x$ depends not only on the density distributed in the neighborhood of interaction but also on the microscopic pedestrians therein, which play the role of singularities in the average crowd distribution due to the granularity of the flow. The corresponding term is
\begin{equation}
	\sum_{\substack{k=1,\,\dots,\,N\\ P_k(t)\ne x}}f(\vert P_k(t)-x\vert)g(\alpha_{xP_k(t)})
		\frac{P_k(t)-x}{\vert P_k(t)-x\vert},
	\label{eq.nu.micro-for-macro}
\end{equation}
which gives the microscopic contribution to the macroscopic dynamics.

\subsection{Dimensional analysis}
\label{sect.dim.anal}
In order to scale correctly the microscopic and the macroscopic contributions, it is convenient to refer to the non-dimensional form of the model. For this, let us preliminarily notice that the main quantities involved in the equations have the following dimensions:
\begin{itemize}
\item $[t]$ = time
\item $[x]$ = length
\item $[v_\des]=[\nu]$ = length/time
\item $[f]$ = length/(time $\times$ pedestrians)
\item $[\mu_t]$ = pedestrians
\item $[\rho]$ = pedestrians/length$^d$
\end{itemize}
where ``pedestrians'' is actually a dimensionless unit. Additionally, $g$ and $\theta$ are dimensionless. Let $L$, $V$, $\varrho$ be characteristic values of length, speed, and density (in particular, $V$ may be the desired speed introduced in Eq. \eqref{eq.des.speed}) to be used to define the following non-dimensional variables and functions:
\begin{gather*}
	x^\ast=\frac{x}{L}, \quad t^\ast=\frac{V}{L}t, \quad
		\nu^\ast[\mu^\ast_{t^\ast}](x^\ast)=\frac{1}{V}\nu[\mu_{\frac{L}{V}t^\ast}](Lx^\ast), \\[0.2cm]
	f^\ast(s^\ast)=\frac{1}{V}f(Ls^\ast), \quad 
		\rho^\ast(t^\ast,\,x^\ast)=\frac{1}{\varrho}\rho\left(\frac{L}{V}t^\ast,\,Lx^\ast\right), \quad
			P^\ast_j(t^\ast)=\frac{1}{L}P_j\left(\frac{L}{V}t^\ast\right).
\end{gather*}
Notice that, due to the choice of $V$ as characteristic speed, the dimensionless desired velocity $v^\ast_\des$ turns out to be a unit vector.

In more detail, the non-dimensional mass measure $\mu^\ast_{t^\ast}$ is given by
\begin{equation*}
	\begin{array}{rcl}
		d\mu^\ast_{t^\ast}(x^\ast) & = & d\mu_{\frac{L}{V}t^\ast}(Lx^\ast) \\[0.3cm]
		& = & \theta\sum_j\delta_{LP^\ast_j(t^\ast)}(Lx^\ast)+
			(1-\theta)\varrho L^d\rho^\ast(t^\ast,\,x^\ast)\,dx^\ast \\[0.3cm]
		& = & \theta\sum_j\delta_{P^\ast_j(t^\ast)}(x^\ast)+
			(1-\theta)\Lambda\rho^\ast(t^\ast,\,x^\ast)\,dx^\ast \\[0.3cm]
		& = & \theta m^\ast_{t^\ast}(x^\ast)+(1-\theta)\Lambda M^\ast_{t^\ast}(x^\ast)
	\end{array}
\end{equation*}
where we have set $\Lambda:=\varrho L^d$ and we have recognized the dimensionless microscopic and macroscopic masses:
\begin{equation*}
	m^\ast_{t^\ast}=\sum_{j=1}^N\delta_{P^\ast_j(t^\ast)}, \quad
		d M^\ast_{t^\ast}(x^\ast)=\rho^\ast(t^\ast,\,\cdot)\,dx^\ast.
\end{equation*}
We notice that the coefficient $\Lambda$ has unit $[\Lambda]$ = pedestrians, therefore \emph{it is a non-dimensional number fixing the scaling between the microscopic and the macroscopic mass}. It says how many pedestrians are represented, in average, by a unit density $\rho^\ast$ in the infinitesimal reference volume $dx^\ast$.

\begin{remark}
The measure
\begin{equation}
	\mu^\ast_{t^\ast}=\theta m^\ast_{t^\ast}+(1-\theta)\Lambda M^\ast_{t^\ast}
	\label{eq.mu.nondim}
\end{equation}
can be read as a \emph{linear interpolation} between the microscopic and the macroscopic mass via the parameter $\theta$, provided $m^\ast_{t^\ast}(\R^d)$, $M^\ast_{t^\ast}(\R^d)$ are, up to scaling, the \emph{same} mass, \ie, $m^\ast_{t^\ast}(\R^d)=\Lambda M^\ast_{t^\ast}(\R^d)$. As we will see later (cf. Corollary \ref{coroll.prop.mun}), in the multiscale model the microscopic and macroscopic mass are individually conserved in time, hence this can be achieved by setting
\begin{equation}
	\Lambda=\frac{m^\ast_0(\R^d)}{M^\ast_0(\R^d)}=\frac{N}{M^\ast_0(\R^d)}
	\label{eq.lambda}
\end{equation}
as long as $0<N,\,M^\ast_0(\R^d)<+\infty$.
\end{remark}

In the following we will invariably refer to the non-dimensional form of the equations, omitting the asterisks on the non-dimensional variables for brevity.

\section{Discrete-in-time model}
\label{sect.discr.time}
In this section we derive a discrete-in-time counterpart of the multiscale model, that will help us gain some insights into the qualitative properties of the mathematical structures previously outlined. In addition, it will serve as a first step to devise a numerical scheme for the approximate solution of the equations.

Let $\Delta{t}_n>0$ be a possibly adaptive time step and let us introduce a sequence of discrete times $\{t_n\}_{n\geq 0}$ such that $t_0=0$ and $t_{n+1}-t_n=\Delta{t}_n$. Denoting $\mu_n:=\mu_{t_n}$, from Eq. \eqref{eq.cons.mass.weak.solution} with the choice $t_1=t_n$, $t_2=t_{n+1}$ we get
\begin{align*}
	\lint_{\R^d}\phi(x)\,d\mu_{n+1}(x)-\lint_{\R^d}\phi(x)\,d\mu_n(x) &=
		\lint_{t_n}^{t_{n+1}}\lint_{\R^d}v(t,\,x)\cdot\nabla{\phi(x)}\,d\mu_t(x)\,dt \\
	&=\Delta{t}_n\lint_{\R^d}v(t_n,\,x)\cdot\nabla{\phi(x)}\,d\mu_n(x)+o(\Delta{t}_n),
\end{align*}
whence
\begin{equation*}
	\lint_{\R^d}\phi(x)\,d\mu_{n+1}(x)=
		\lint_{\R^d}\left[\phi(x)+\Delta{t}_n\,v(t_n,\,x)\cdot\nabla{\phi(x)}\right]\,d\mu_n(x)+o(\Delta{t}_n).
\end{equation*}
At this point let us explicitly assume that $\mu_n(\R^d)<+\infty$. If $v(t_n,\,\cdot)$ is $\mu_n$-uniformly bounded then $\phi(x)+\Delta{t}_n\,v(t_n,\,x)\cdot\nabla{\phi(x)}=\phi(x+\Delta{t}_n\,v(t_n,\,x))+o(\Delta{t}_n)$, thus
\begin{equation*}
	\lint_{\R^d}\phi(x)\,d\mu_{n+1}(x)=\lint_{\R^d}\phi(x+\Delta{t}_n\,v(t_n,\,x))\,d\mu_n(x)+o(\Delta{t}_n).
\end{equation*}
Defining the \emph{flow map} $\gamma_n(x):=x+v(t_n,\,x)\Delta{t}_n$ and neglecting the term $o(\Delta{t}_n)$, we are finally left with
\begin{equation}
	\lint_{\R^d}\phi(x)\,d\mu_{n+1}(x)=\lint_{\R^d}\phi(\gamma_n(x))\,d\mu_n(x),
	\label{eq.pushfwd.weak}
\end{equation}
which makes sense actually for every bounded and Borel function $\phi$. Choosing $\phi=\chi_E$ for some measurable set $E\in\B(\R^d)$ entails
\begin{equation*}
	\mu_{n+1}(E)=\mu_n(\gamma_n^{-1}(E)), \quad \forall\,E\in\B(\R^d),
\end{equation*}
meaning that $\mu_{n+1}$ is the \emph{push forward} of $\mu_n$ via the flow map $\gamma_n$, also written $\mu_{n+1}=\gamma_n\#\mu_n$. Equation \eqref{eq.pushfwd.weak} provides a discrete-in-time counterpart of Eq. \eqref{eq.cons.mass.weak.solution}. Obviously, it requires to be supplemented by an initial condition $\mu_0$ in order for the sequence $\{\mu_n\}_{n\geq 1}$ to be recursively generated.

Notice that, with the velocity field \eqref{eq.v}, it results $v(t_n,\,x)=v[\mu_n](x)$ with in particular:
\begin{align}
	\nu[\mu_n](x)&=\theta\sum_{\substack{k=1,\,\dots,\,N\\ P_k^n\ne x}}f(\vert P_k^n-x\vert)
		g(\alpha_{xP_k^n})\frac{P_k^n-x}{\vert P_k^n-x\vert} \notag \\
	&\phantom{=}+(1-\theta)\Lambda\lint_{\R^d}f(\vert y-x\vert)g(\alpha_{xy})\frac{y-x}{\vert y-x\vert}
		\rho_n(y)\,dy,
	\label{eq.nu.tdiscr}
\end{align}
where $P_k^n:=P_k(t_n)$ and $\rho_n:=\rho(t_n,\,\cdot)$.

\subsection*{Preserving the multiscale structure of the measure}
Recall that in the multiscale model we assumed that our measure is composed by a microscopic granular and a macroscopic continuous mass\footnote{With respect to the general structure of a measure as provided by Riesz's Theorem, this means that we are in particular excluding the Cantor's part.}. Of course, this is just a formal assumption made to write the model. From the analytical point of view, it need be proved that such a measure can be actually a solution to our equations.

Set $m_n:=m_{t_n}$, $M_n:=M_{t_n}$, so that, owing to Eq. \eqref{eq.mu.nondim}, the measure $\mu_n$ can be given the form
\begin{equation}
	\mu_n=\theta m_n+(1-\theta)\Lambda M_n.
	\label{eq.mun.micromacro}
\end{equation}
The following result clarifies the role played by the flow map $\gamma_n$ in preserving the multiscale structure of $\mu_n$ after one time step.
\begin{theorem}
Let a constant $C_n>0$ exist such that
\begin{equation}
	\Leb^d(\gamma_n^{-1}(E))\leq C_n\Leb^d(E), \quad \forall\,E\in\B(\R^d).
	\label{eq.prop.gamman}
\end{equation}
Given $\mu_n$ as in Eq. \eqref{eq.mun.micromacro}, there exist both a unique atomic measure $m_{n+1}$ and a unique Lebesgue-absolutely continuous measure $M_{n+1}$, with a.e. nonnegative density, such that $\mu_{n+1}=\theta m_{n+1}+(1-\theta)\Lambda M_{n+1}$.
\label{theo.micromacro.preserved}
\end{theorem}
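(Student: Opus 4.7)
The plan is to exploit the linearity of the push-forward operation. Since $\mu_{n+1}=\gamma_n\#\mu_n$ by Eq.~\eqref{eq.pushfwd.weak}, the decomposition \eqref{eq.mun.micromacro} immediately yields
\begin{equation*}
\mu_{n+1}=\theta(\gamma_n\#m_n)+(1-\theta)\Lambda(\gamma_n\#M_n),
\end{equation*}
so the natural candidates are $m_{n+1}:=\gamma_n\#m_n$ and $M_{n+1}:=\gamma_n\#M_n$. The proof then splits into: (i) $m_{n+1}$ is atomic, (ii) $M_{n+1}\ll\Leb^d$ with a.e.\ nonnegative density, (iii) such a decomposition is unique.

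For (i), the identity $\gamma_n\#\delta_x=\delta_{\gamma_n(x)}$, which follows straight from the definition of push-forward, gives $m_{n+1}=\sum_{j=1}^N\delta_{\gamma_n(P_j^n)}$; as a by-product, this also exposes the microscopic update rule $P_j^{n+1}=\gamma_n(P_j^n)$. For (ii), hypothesis \eqref{eq.prop.gamman} is exactly what is needed: if $E\in\B(\R^d)$ satisfies $\Leb^d(E)=0$, then $\Leb^d(\gamma_n^{-1}(E))\le C_n\Leb^d(E)=0$, and since $M_n\ll\Leb^d$ one gets
\begin{equation*}
M_{n+1}(E)=M_n(\gamma_n^{-1}(E))=0.
\end{equation*}
Hence $M_{n+1}\ll\Leb^d$, and Radon--Nikod\'ym supplies a density $\rho_{n+1}\in L^1_{\text{loc}}(\R^d)$, which is a.e.\ nonnegative because $M_{n+1}$, being the push-forward of a positive measure, is itself positive.

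For (iii), assuming $\theta\in(0,1)$, suppose $\mu_{n+1}=\theta m'_{n+1}+(1-\theta)\Lambda M'_{n+1}$ is another such splitting with $m'_{n+1}$ atomic and $M'_{n+1}\ll\Leb^d$. Then
\begin{equation*}
\theta(m_{n+1}-m'_{n+1})=(1-\theta)\Lambda(M'_{n+1}-M_{n+1})
\end{equation*}
is at once a signed atomic measure (concentrated on an at most countable set) and a signed measure absolutely continuous w.r.t.\ $\Leb^d$, so it must vanish, giving $m'_{n+1}=m_{n+1}$ and $M'_{n+1}=M_{n+1}$. Equivalently, this is the uniqueness of the Lebesgue decomposition of $\mu_{n+1}$ once the Cantor part has been excluded.

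The main obstacle, and really the only non-cosmetic step, is the absolute-continuity check in (ii): it is precisely there that \eqref{eq.prop.gamman} is indispensable. Without a bound of the form $\Leb^d(\gamma_n^{-1}(E))\lesssim\Leb^d(E)$, the flow map $\gamma_n$ could compress a set of positive Lebesgue measure onto a negligible one, so that $\gamma_n\#M_n$ would develop a singular continuous component and the clean microscopic-macroscopic dichotomy \eqref{eq.mun.micromacro} would fail to propagate from one time step to the next.
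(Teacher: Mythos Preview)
Your proof is correct and follows essentially the same route as the paper's: push forward the two parts separately via $\gamma_n$, verify that $\gamma_n\#m_n$ is atomic and that hypothesis \eqref{eq.prop.gamman} forces $\gamma_n\#M_n\ll\Leb^d$, then invoke the uniqueness of the Lebesgue (Radon--Nikod\'ym) decomposition. The only cosmetic differences are that the paper computes the density $\rho_{n+1}$ explicitly as a Radon--Nikod\'ym limit to check nonnegativity, whereas you argue more directly that the push-forward of a positive measure is positive.
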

\begin{proof}
1. Using the linearity of the operator $\gamma_n\#\cdot$, the measure $\mu_{n+1}$ is given by
\begin{equation*}
	\mu_{n+1}=\theta(\gamma_n\# m_n)+(1-\theta)\Lambda(\gamma_n\# M_n).
\end{equation*}

2. Let us define
\begin{equation*}
	m_{n+1}:=\gamma_n\# m_n.
\end{equation*}
A direct calculation shows that such a $m_{n+1}$ is in fact an atomic measure. For any measurable set $E\in\B(\R^d)$ we compute:
\begin{align*}
	(\gamma_n\# m_n)(E)=m_n(\gamma_n^{-1}(E))&=\sum_{j=1}^N\delta_{P_j^n}(\gamma_n^{-1}(E)) \\
	&=\operatorname{card}\{\gamma_n(P_j^n)\in E\} \\
	&=\sum_{j=1}^N\delta_{\gamma_n(P_j^n)}(E),
\end{align*}
hence $m_{n+1}$ as defined above is in turn a combination of Dirac masses centered in the new positions $\{P_j^{n+1}\}_{j=1}^N$ given by
\begin{equation}
	P_j^{n+1}:=\gamma_n(P_j^n)=P_j^n+v[\mu_n](P_j^n)\Delta{t}_n.
	\label{eq.new.pos}
\end{equation}

3. Analogously, let us define
\begin{equation*}
	M_{n+1}:=\gamma_n\# M_n.
\end{equation*}
We claim that, under the hypothesis of the theorem, this measure is absolutely continuous w.r.t. $\Leb^d$. To see this, let $E\in\B(\R^d)$ be such that $\Leb^d(E)=0$. Then $\Leb^d(\gamma_n^{-1}(E))\leq C_n\Leb^d(E)=0$, whence, using that $M_n\ll\Leb^d$ by assumption, we get $M_{n+1}(E)=M_n(\gamma_n^{-1}(E))=0$ and the claim follows.

4. To show the non-negativity of the density of $M_{n+1}$ we take the Radon-Nikodym derivative. Then we discover, for a.e. $x$,
\begin{equation*}
	\rho_{n+1}(x)=\lim_{r\to 0^+}\frac{M_{n+1}(B_r(x))}{\Leb^d(B_r(x))}=
		\lim_{r\to 0^+}\frac{1}{\omega_dr^d}\lint_{\gamma_n^{-1}(B_r(x))}\rho_n(y)\,dy,
\end{equation*}
where $\omega_d$ is the volume of the unit ball in $\R^d$. Since $\rho_n$ is a.e. nonnegative by assumption, the same holds for $\rho_{n+1}$ and we are done.

5. Finally, uniqueness of $m_{n+1}$, $M_{n+1}$ is implied by the uniqueness of the Radon-Nikodym decomposition of a measure.
\end{proof}

The proof of Theorem \ref{theo.micromacro.preserved} is constructive, indeed it shows explicitly how to obtain the measure $\mu_{n+1}$ starting from $\mu_n$: one simply pushes forward \emph{separately} the microscopic and the macroscopic mass via the \emph{common} flow map $\gamma_n$.

By referring to the results proved in \cite{piccoli2010tem}, we can state some additional properties of the measure $\mu_n$.
\begin{corollary}
If $\gamma_n$ satisfies \eqref{eq.prop.gamman} for all $n\geq 0$ and the initial measure $\mu_0$ complies with the form \eqref{eq.mun.micromacro} then:
\begin{enumerate}
\item there exists a unique sequence of atomic measures $\{m_n\}_{n\geq 1}$ and a unique sequence of positive Lebesgue-absolutely continuous measures $\{M_n\}_{n\geq 1}$ such that $\mu_n$ has the form \eqref{eq.mun.micromacro} for all $n\geq 1$;
\item the measure $\mu_n$ satisfies the following conservation law:
\begin{equation}
	\mu_{n+1}(E)-\mu_n(E)=-\left[\mu_n(E\setminus\gamma_n^{-1}(E))-\mu_n(\gamma_n^{-1}(E)\setminus E)\right]
	\label{eq.mun.cons.law}
\end{equation}
for all $E\in\B(\R^d)$. In particular, both $m_n$ and $M_n$ satisfy this law separately at each time step;
\item if $\rho_0\in L^\infty_{\text{loc}}(\R^d)$ then $\rho_n\in L^\infty_{\text{loc}}(\R^d)$ for all $n\geq 1$, with moreover
\begin{equation*}
	\esssup_{x\in E}{\vert\rho_n(x)\vert}\leq\prod_{j=0}^{n-1}C_j\esssup_{x\in E}{\vert\rho_0(x)\vert}
\end{equation*}
for all compact set $E\subset\R^d$, where the $C_j$'s are those appearing in the statement of Theorem \ref{theo.micromacro.preserved}.
\end{enumerate}
\label{coroll.prop.mun}
\end{corollary}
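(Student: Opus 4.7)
The plan is to establish the three items in sequence, in each case leveraging Theorem \ref{theo.micromacro.preserved} together with the push-forward identity \eqref{eq.pushfwd.weak}.

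For item 1 I would argue by induction on $n$. The base case is just Theorem \ref{theo.micromacro.preserved} applied to $\mu_0$. For the inductive step, the standing hypothesis \eqref{eq.prop.gamman} on $\gamma_n$ permits a verbatim reapplication of Theorem \ref{theo.micromacro.preserved} to the decomposition of $\mu_n$ inherited from the previous step, yielding $m_{n+1}$ and $M_{n+1}$ with the stated properties. Uniqueness at each step follows from the uniqueness of the splitting of a finite Borel measure into a purely atomic part and a Lebesgue-absolutely continuous part, once the Cantor part is excluded as in the setting of Section \ref{sect.multiscale}.

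For item 2 I would start from the identity $\mu_{n+1}(E)=\mu_n(\gamma_n^{-1}(E))$ obtained by choosing $\phi=\chi_E$ in \eqref{eq.pushfwd.weak}. Decomposing both $E$ and $\gamma_n^{-1}(E)$ along their common intersection $E\cap\gamma_n^{-1}(E)$ and subtracting yields immediately
\begin{equation*}
\mu_{n+1}(E)-\mu_n(E)=\mu_n(\gamma_n^{-1}(E)\setminus E)-\mu_n(E\setminus\gamma_n^{-1}(E)),
\end{equation*}
which rearranges to \eqref{eq.mun.cons.law}. Because the push-forward operator is linear and, by item 1, $m_{n+1}=\gamma_n\# m_n$ and $M_{n+1}=\gamma_n\# M_n$ separately, the same algebra applies componentwise, giving the analogous law for each of $m_n$ and $M_n$.

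For item 3 I would iterate the Radon-Nikodym representation derived in Step 4 of the proof of Theorem \ref{theo.micromacro.preserved}. At each step,
\begin{equation*}
\rho_{n+1}(x)=\lim_{r\to 0^+}\frac{1}{\omega_d r^d}\lint_{\gamma_n^{-1}(B_r(x))}\rho_n(y)\,dy,
\end{equation*}
and combining $\Leb^d(\gamma_n^{-1}(B_r(x)))\leq C_n\omega_d r^d$ with an essential-sup bound on $\rho_n$ over the relevant preimage produces a local pointwise inequality of the form $\rho_{n+1}(x)\leq C_n\,\esssup_y\rho_n(y)$. Telescoping this estimate down to $n=0$ yields the product $\prod_{j=0}^{n-1}C_j$ claimed in the statement. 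The step I expect to require the most care is the control of the set on which the essential supremum is taken: to obtain the form written in the corollary one must check that the nested preimages $\gamma_0^{-1}\circ\cdots\circ\gamma_{n-1}^{-1}(E)$ of a compact $E$ remain inside a fixed compact region where $\rho_0$ is essentially bounded, which in turn rests on a uniform-in-$n$ bound on the velocity fields $v[\mu_n]$ along the iteration. This bookkeeping is most cleanly imported from the estimates already proved in \cite{piccoli2010tem}, as the statement of the corollary itself suggests.
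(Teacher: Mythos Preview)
Your proposal is correct and follows essentially the same route as the paper: induction via Theorem \ref{theo.micromacro.preserved} for item 1, the disjoint decomposition of $E$ and $\gamma_n^{-1}(E)$ along their intersection for item 2, and deferral to the one-step $L^\infty_{\text{loc}}$ estimate of \cite{piccoli2010tem} followed by induction for item 3. The only difference is cosmetic: the paper cites \cite{piccoli2010tem} outright for the bound $\esssup_{x\in E}\vert\rho_{n+1}(x)\vert\leq C_n\esssup_{x\in E}\vert\rho_n(x)\vert$ without unpacking the preimage bookkeeping you (rightly) flag as the delicate point.
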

\begin{proof}
1. Existence and uniqueness of microscopic and macroscopic masses have been proved in Theorem \ref{theo.micromacro.preserved} for one time step, hence they follow for all times $n\geq 1$ by induction.

2. In view of the $\sigma$-additivity of the measure we have, for all $E\in\B(\R^d)$,
\begin{equation*}
	\mu_{n+1}(E)=\mu_n(\gamma_n^{-1}(E)\cap E)+\mu_n(\gamma_n^{-1}(E)\setminus E).
\end{equation*}
Subtracting $\mu_n(E)$ at both sides and collecting conveniently gives
\begin{equation*}
	\mu_{n+1}(E)-\mu_n(E)=-\left[\mu_n(E)-\mu_n(\gamma_n^{-1}(E)\cap E)\right]+\mu_n(\gamma_n^{-1}(E)\setminus E)
\end{equation*}
whence, observing that $E=(\gamma_n^{-1}(E)\cap E)\cup(E\setminus\gamma_n^{-1}(E))$ with disjoint union, the thesis follows. Since we know from Theorem \ref{theo.micromacro.preserved} that $m_{n+1}$, $M_{n+1}$ are in turn generated by push forward with $\gamma_n$, this reasoning can be repeated to find that each of them fulfills the very same conservation law.

3. In \cite{piccoli2010tem} it is proved that $\rho_n\in L^\infty_{\text{loc}}(\R^d)$ implies $\rho_{n+1}\in L^\infty_{\text{loc}}(\R^d)$ as well, with $\esssup_{x\in E}{\vert\rho_{n+1}(x)\vert}\leq C_n\esssup_{x\in E}{\vert\rho_n(x)\vert}$. Thus proceeding by induction from $n=0$ we get the result.
\end{proof}

Some comments on the results of this section are in order.
\begin{enumerate}
\item [(i)] The main assumption of both Theorem \ref{theo.micromacro.preserved} and Corollary \ref{coroll.prop.mun} is that the flow map $\gamma_n$ satisfies Eq. \eqref{eq.prop.gamman}. In general it may be hard to check the validity of this property directly but in \cite{piccoli2010tem} it is proved that a sufficient condition for it to hold true is that the velocity $v[\mu_n]$ be Lipschitz continuous and that the time step be chosen so that $\Delta{t}_n\operatorname{Lip}(v[\mu_n])<1$, $n=0,\,1,\,2,\,\dots$.
\item [(ii)] Equation \eqref{eq.mun.cons.law} states that the variation of the mass of a set $E$ in one time step is due to the net mass inflow or outflow across $\partial E$. Indeed, $E\setminus\gamma_n^{-1}(E)$ is the subset of $E$ which is not mapped into $E$ by $\gamma_n$ (outgoing flux), and $\gamma_n^{-1}(E)\setminus E$ is the subset of $\R^d\setminus E$ which is mapped into $E$ by $\gamma_n$ (ingoing flux).
\item [(iii)] Choosing $E=\R^d$ in Eq. \eqref{eq.mun.cons.law} yields $\mu_{n+1}(\R^d)=\mu_n(\R^d)$, \ie, the conservation of the total mass at each time step. If $\mu_0(\R^d)<+\infty$ then the mass is finite for all $n$, therefore, up to normalization, the $\mu_n$'s may be regarded as probability measures. Furthermore, since Eq. \eqref{eq.mun.cons.law} applies separately also to $m_n$ and $M_n$, both the total microscopic and the total macroscopic masses are conserved in time.
\end{enumerate}

\section{Numerical approximation of the equations}
\label{sect.num.approx}
As anticipated at the beginning of Section \ref{sect.discr.time}, the discrete-in-time model provides a first discretization of the equations, which would be sufficient for tracking the microscopic mass (cf. Eq. \eqref{eq.new.pos}). However, the macroscopic mass requires a further discretization in space in order to come to a full approximation of the density $\rho$. Notice that, in principle, one may refer to Eq. \eqref{eq.continuity.equation} and rely on the wide literature on numerical methods for nonlinear hyperbolic conservation laws \cite{MR1925043}. Nevertheless, aside from the intrinsic complication due to the multidimensional nature of the equations, this strategy poses several nontrivial technical difficulties. For example, it demands a correct definition of the convection velocity (\ie, formally the derivative of the flux $\rho v$ with respect to the density) in presence of nonlocal multiscale fluxes, as well as a consistent formulation of entropy-like criteria for picking up physically significant solutions. All these issues are instead bypassed if one maintains the measure-theoretic formalism.

For the discretization in space of the density $\rho_n$ we partition the domain in pairwise disjoint $d$-dimensional cells $E_i\in\B(\R^d)$, where $i\in\Z^d$ is an integer multi-index, sharing a characteristic size $h>0$ such that $\Leb^d(E_i)\to 0$ for all $i$ when $h\to 0^+$ (for instance, $h\sim\operatorname{diam}E_i$). Every cell is further identified by one of its points $x_i$, e.g., its center in case of regular cells.

We approximate $\rho_n$ by a piecewise constant function $\tilde{\rho}_n$ on the numerical grid:
\begin{equation*}
	\tilde{\rho}_n(x)\equiv\rho_i^n, \quad \forall\,x\in E_i,
\end{equation*}
where $\rho_i^n\geq 0$ is the value that $\tilde{\rho}_n$ takes in the cell $E_i$. Consequently, the measure $M_n$ is approximated by the piecewise constant measure $d\tilde{M}_n=\tilde{\rho}_n\,d\Leb^d$, which entails the approximation $\tilde{\mu}_n=\theta m_n+(1-\theta)\Lambda\tilde{M}_n$ for $\mu_n$.

Analogously, we approximate the velocity $v[\mu_n]$ by a piecewise constant field
\begin{equation*}
	\tilde{v}[\tilde{\mu}_n](x)\equiv v_i^n, \quad \forall\,x\in E_i,
\end{equation*}
where the values $v_i^n\in\R^d$ are computed as $v_i^n=v[\tilde{\mu}_n](x_i)$. The discretization of the velocity gives rise to the following discrete flow map:
\begin{equation*}
	\tilde{\gamma}_n(x)=x+\tilde{v}[\tilde{\mu}_n](x)\Delta{t}_n,
\end{equation*}
which turns out to be a piecewise translation because $\tilde{v}[\tilde{\mu}_n]$ is constant in each cell.

Finally, we look for a piecewise constant approximation $\tilde{M}_{n+1}$ of $M_{n+1}$ by imposing the push forward of $\tilde{M}_n$ via the flow map $\tilde{\gamma}_n$:
\begin{equation*}
	\tilde{M}_{n+1}(E)=\tilde{M}_n(\tilde{\gamma}_n^{-1}(E)), \quad \forall\,E\in\B(\R^d).
\end{equation*}
In particular, choosing $E=E_i$ yields
\begin{equation}
	\rho_i^{n+1}=\frac{1}{\Leb^d(E_i)}\sum_{k\in\Z^d}\rho_k^n\Leb^d(E_i\cap\tilde{\gamma}_n(E_k)),
		\quad \forall\,i\in\Z^d,
	\label{eq.num.scheme.macro}
\end{equation}
which provides a time-explicit scheme to compute the coefficients of the density $\tilde{\rho}_{n+1}$ from those of $\tilde{\rho}_n$. Notice in particular that $\tilde{\gamma}_n(E_k)$ is simply the set $E_k+v_k^n\Delta{t}_n$.

Notice that this scheme is positivity-preserving, in the sense that $\tilde{\rho}_n\geq 0$ implies $\tilde{\rho}_{n+1}\geq 0$ as well, hence, by induction, $\tilde{\rho}_0\geq 0$ implies $\tilde{\rho}_n\geq 0$ for all $n>0$. Such a basic property is not as straightforward in usual numerical schemes for hyperbolic conservation laws. Indeed, unless suitable corrections are implemented, the latter may develop oscillations leading to locally negative approximate solutions even when the exact solution is not expected to be so.

Furthermore, considering that $\tilde{\gamma}_n$ is a translation in each grid cell and using the invariance of the Lebesgue measure under rigid transformations, we deduce
\begin{equation*}
	\lint_{\R^d}\tilde{\rho}_{n+1}(x)\,dx=
		\sum_{k\in\Z^d}\rho_k^n\sum_{i\in\Z^d}\Leb^d(\tilde{\gamma}_n^{-1}(E_i)\cap E_k)
		=\sum_{k\in\Z^d}\rho_k^n\Leb^d(E_k)=\lint_{\R^d}\tilde{\rho}_n(x)\,dx,
\end{equation*}
thus the approximate macroscopic mass $\tilde{M}_n$ is conserved in time.

The quality of the spatial discretization described above with respect to the refinement of the grid, in the case of regular flow maps, is provided by the following result.
\begin{remark}
At this point we assume explicitly that the domain of the problem is a \emph{bounded} set $\Omega\subset\R^d$, which for all fixed $h>0$ is partitioned with a \emph{finite} number of grid cells (however tending to infinity when $h\to 0^+$). The multi-index $i$ of the grid cells runs in a finite subset $\I\subset\Z^d$.
\end{remark}
\begin{theorem}
Assume that $\gamma_n$ is a diffeomorphism and let $h,\,\Delta{t}_n$ be sufficiently small and satisfying
\begin{equation}
	\max_{i\in\I}\frac{\Delta{t}_n}{h}\vert v_i^n\vert\leq 1.
	\label{eq.CFL-like}
\end{equation}
Then:
\begin{enumerate}
\item[(i)] There exists a constant $C_n>0$, independent of $h$, such that
\begin{equation*}
	\sum_{i\in\I}\vert M_{n+1}(E_i)-\tilde{M}_{n+1}(E_i)\vert\leq C_n\left(
		\lint_{\Omega}\vert\rho_n(x)-\tilde{\rho}_n(x)\vert\,dx+h\right).
\end{equation*}
\item[(ii)] If $v[\mu_n](x)$ is uniformly bounded, there exists a constant $C'_n>0$, independent of $h$, such that
\begin{equation*}
	\max_{i\in\I}\vert M_n(E_i)-\tilde{M}_n(E_i)\vert\leq
		\max_{i\in\I}\vert M_0(E_i)-\tilde{M}_0(E_i)\vert+C'_nh^d.
\end{equation*}
\end{enumerate}
\label{theo.spat.discr}
\end{theorem}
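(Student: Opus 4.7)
The plan is to handle (i) and (ii) via a common decomposition. The push-forward characterizations $M_{n+1}(E_i) = M_n(\gamma_n^{-1}(E_i)) = \int_{\gamma_n^{-1}(E_i)}\rho_n\,dx$ and $\tilde{M}_{n+1}(E_i) = \int_{\tilde\gamma_n^{-1}(E_i)}\tilde\rho_n\,dx$ suggest splitting the cellwise error as a density-approximation contribution plus a flow-map-approximation contribution. The auxiliary quantity driving the latter is the pointwise displacement $\delta_n(x) := |\gamma_n(x) - \tilde\gamma_n(x)| = \Delta{t}_n|v[\mu_n](x) - v_k^n|$ for $x \in E_k$, which I would control by writing
\[
v[\mu_n](x) - v_k^n = \bigl(v[\mu_n](x) - v[\mu_n](x_k)\bigr) + \bigl(v[\mu_n](x_k) - v[\tilde\mu_n](x_k)\bigr);
\]
Lipschitz continuity of $v[\mu_n]$ handles the first piece as $O(h)$, and the linear integral structure of $\nu[\cdot]$ in \eqref{eq.nu.tdiscr} handles the second as a constant times $\|\rho_n - \tilde\rho_n\|_{L^1(\Omega)}$.

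For (i), I would write
\[
M_{n+1}(E_i) - \tilde{M}_{n+1}(E_i) = \int_{\gamma_n^{-1}(E_i)}(\rho_n - \tilde\rho_n)\,dx + \int \tilde\rho_n\bigl(\chi_{\gamma_n^{-1}(E_i)} - \chi_{\tilde\gamma_n^{-1}(E_i)}\bigr)dx,
\]
take absolute values, and sum over $i \in \I$. Pairwise disjointness of the sets $\gamma_n^{-1}(E_i)$ (which follows from $\gamma_n$ being a diffeomorphism) bounds the first contribution by $\int_\Omega |\rho_n - \tilde\rho_n|\,dx$. For the second, I would factor out $\|\tilde\rho_n\|_{L^\infty}$ and observe that each $x$ contributes to $\sum_i \Leb^d(\gamma_n^{-1}(E_i) \triangle \tilde\gamma_n^{-1}(E_i))$ at most twice, only when $\gamma_n(x)$ and $\tilde\gamma_n(x)$ lie in distinct cells. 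The sum is therefore controlled by the $\Leb^d$-measure of the $\sup\delta_n$-neighborhood of the grid skeleton, which is $O(\sup\delta_n/h)$ since the $(d-1)$-measure of the skeleton inside the bounded $\Omega$ is $O(h^{-1})$. Combining with the estimate of $\delta_n$ and invoking the CFL condition \eqref{eq.CFL-like} to keep $\Delta{t}_n/h$ bounded yields (i).

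For (ii), I would rewrite \eqref{eq.num.scheme.macro} as a linear averaging $\tilde{M}_{n+1}(E_i) = \sum_k w_{ik}^n \tilde{M}_n(E_k)$ with nonnegative weights $w_{ik}^n := \Leb^d(E_k \cap \tilde\gamma_n^{-1}(E_i))/\Leb^d(E_k)$, and set $\epsilon_i^n := M_n(E_i) - \tilde{M}_n(E_i)$. The splitting
\[
\epsilon_i^{n+1} = \sum_k w_{ik}^n \epsilon_k^n + R_i^n, \qquad R_i^n := M_n\bigl(\gamma_n^{-1}(E_i)\bigr) - \sum_k w_{ik}^n M_n(E_k),
\]
reduces matters to controlling both the stability factor $\sum_k w_{ik}^n$ and the consistency residual $R_i^n$. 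Rewriting $R_i^n$ as $\int(\rho_n - \bar\rho_n)\chi_{\gamma_n^{-1}(E_i)}\,dx + \int \bar\rho_n\bigl(\chi_{\gamma_n^{-1}(E_i)} - \chi_{\tilde\gamma_n^{-1}(E_i)}\bigr)dx$ with $\bar\rho_n$ the cell-average interpolant of $\rho_n$, the density piece is $O(h^{d+1})$ by smoothness of $\rho_n$ on a cell of measure $O(h^d)$, and the flow-map piece is bounded by $\|\bar\rho_n\|_{L^\infty}\cdot\Leb^d(\gamma_n^{-1}(E_i)\triangle\tilde\gamma_n^{-1}(E_i)) = O(h^{d-1}\cdot\sup\delta_n) = O(h^d)$, using $\sup\delta_n = O(h)$ in the smooth context. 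The stability factor is $1 + O(h)$ per step from the diffeomorphism hypothesis, and iterating over the (fixed) $n$ steps absorbs both the multiplicative growth and the accumulated residuals into the $h$-independent constant $C'_n$.

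The main obstacle is the self-referential nature of $v_i^n = v[\tilde\mu_n](x_i)$, which makes $\delta_n$ depend on the density error; this is exactly what prevents (i) from being a plain $O(h)$ estimate and forces the $L^1$ term to reappear on the right-hand side. A secondary subtlety is the unit-prefactor $L^\infty$ stability in (ii): the weight sum $\sum_k w_{ik}^n$ is not identically $1$ because the piecewise-translation map $\tilde\gamma_n$ can locally compress or stretch volumes by $O(h)$, so one must carefully exploit the natural scaling $\max_i|\epsilon_i^0| = O(h^d)$ together with the diffeomorphism hypothesis to absorb the $(1+O(h))^n$ growth into the additive $C'_n h^d$ term.
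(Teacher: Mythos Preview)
The paper does not provide a proof: the entirety of its argument is ``See \cite{piccoli2010tem}.'' A comparison with the paper's own reasoning is therefore not possible from the present source.

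On its own merits, your decomposition into a density-error term plus a flow-map-error term is the natural route, and the symmetric-difference/skeleton-neighborhood estimate for the latter is the right mechanism. Two minor caveats. In (ii) you appeal to ``smoothness of $\rho_n$'' to obtain $O(h^{d+1})$ for the density piece of $R_i^n$, but no such regularity is assumed in the statement; fortunately the cruder bound $|R_i^n|=O(h^d)$, which follows directly from $\|\rho_n\|_\infty<\infty$ together with the Jacobian control $\Leb^d(\gamma_n^{-1}(E_i))=O(h^d)$, already suffices once you exploit $\max_i|\epsilon_i^0|=O(h^d)$ to absorb the $(1+O(h))^n$ stability factor, exactly as you indicate. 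In (i), note that the CFL condition \eqref{eq.CFL-like} bounds $\Delta t_n|v_i^n|/h$, not $\Delta t_n/h$; to close the skeleton estimate as you have written it you need either a lower bound on the speed or the separate smallness of $\Delta t_n$ that the hypothesis ``$h,\,\Delta t_n$ sufficiently small'' presumably supplies.
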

\begin{proof}
See \cite{piccoli2010tem}.
\end{proof}

In order to gain some control over the error introduced by the spatial discretization, Theorem \ref{theo.spat.discr} requires the CFL-like condition \eqref{eq.CFL-like} to be satisfied at each time step, similarly to numerical schemes for hyperbolic conservation laws. However there is a remarkable difference from their CFL condition, namely that Eq. \eqref{eq.CFL-like} involves directly the flux velocity and \emph{not} the convection velocity.

\subsection*{The algorithm}
Here we detail the numerical algorithm stemming from the above scheme, that we use for simulations.

The algorithm combines a microscopic and a macroscopic part. The former handles the evolution of pedestrian positions, updating a vector which stores the values $P_j^n\in\R^d$. The latter manages instead the evolution of the density, and at every time step it updates the values $\rho_i^n$ at the grid cells. The two models evolve by means of the same velocity field $\tilde{v}[\tilde{\mu}_n]$, thus guaranteeing coherence of the final solution. This conceptual scheme is motivated by Theorem \ref{theo.micromacro.preserved}. The velocity field must be defined at pedestrian positions $\{P_j^n\}_{j=1}^N$ for the microscopic part and at the grid cells $\{E_i\}_{i\in\I}$ for the macroscopic part.

Let us introduce the following superscripts:
\begin{itemize}
\item \emph{micro}: quantities defined at pedestrian positions;
\item \emph{macro}: quantities defined at grid cells;
\item \emph{micro-for-micro}: microscopic quantities computed at pedestrian positions;
\item \emph{micro-for-macro}: microscopic quantities computed at grid cells;
\item \emph{macro-for-micro}: macroscopic quantities computed at pedestrian positions;
\item \emph{macro-for-macro}: macroscopic quantities computed at grid cells.
\end{itemize}

The algorithm consists of the following steps.
\begin{enumerate}
\item \emph{Initialization}. We fix the number $N$ of microscopic pedestrians that we want to model, we define their positions, and we compute the coefficients $\rho_i^0$ of the initial density according to a local average of the microscopic mass, taking the scaling \eqref{eq.lambda} into account. More precisely, we set:
\begin{equation*}
	\rho_i^0=\frac{m_0(B_\xi(x_i))}{\Lambda\Leb^d(B_\xi(x_i))}, \quad i\in\Z^d,
\end{equation*}
where $m_0$ is the microscopic mass at the initial time and $B_\xi(x_i)$ is the ball centered in the center of the grid cell $E_i$ with radius $\xi>0$. The latter is tuned depending on the positions of the microscopic pedestrians, in such a way that the relation $\Lambda=m_0(\R^d)/\tilde{M}_0(\R^d)$ be satisfactorily fulfilled in the numerical sense\footnote{Notice that if one replaces $B_\xi(x_i)$ with the cell $E_i$ then the measures $m_0$, $\tilde{M}_0$ satisfy the scaling \eqref{eq.lambda} exactly. However, averaging on a neighborhood a bit larger than a single grid cell is essential in order to have a macroscopic density really distributed in space rather than clustered in grid cells.} ($\tilde{M}_0$ being the approximate macroscopic mass at the initial time).
\item \emph{Microscopic part}. At time $t=t_n$ we compute the sum at the right-hand side of Eq. \eqref{eq.nu.tdiscr} for $x=P_j^n$ obtaining
\begin{equation*}
	\tilde{\nu}^\text{micro-for-micro}:=\tilde{\nu}[m_n](P_j^n).
\end{equation*}
The same computation performed for $x=x_i$ gives instead
\begin{equation*}
	\tilde{\nu}^\text{micro-for-macro}:=\tilde{\nu}[m_n](x_i),
\end{equation*}
cf. Eq. \eqref{eq.nu.micro-for-macro}, which will be shared with the macroscopic part of the code.
\item \emph{Macroscopic part}. At the same time instant $t=t_n$ we numerically evaluate the integral at the right-hand side of Eq. \eqref{eq.nu.tdiscr} for $x=x_i$, using the approximate density $\tilde{\rho}_n$ in place of $\rho_n$. This way we obtain
\begin{equation*}
	\tilde{\nu}^\text{macro-for-macro}:=\tilde{\nu}[\tilde{M}_n](x_i).
\end{equation*}
Next we compute the same integral for $x=P_j^n$, which yields
\begin{equation*}
	\tilde{\nu}^\text{macro-for-micro}:=\tilde{\nu}[\tilde{M}_n](P_j^n),
\end{equation*}
cf. Eq. \eqref{eq.nu.macro-for-micro}. In particular, the integrals involved in $\tilde{\nu}^\text{macro-for-macro}$ and $\tilde{\nu}^\text{macro-for-micro}$ are numerically evaluated via a first order quadrature formula. This component of the velocity field will be shared with the microscopic part of the code.
\item \emph{Desired velocity}. If the velocity field $v_\des$ is given analytically, the computation of $v_\des^\text{micro}:=v_\des(P_j^n)$ and of $v_\des^\text{macro}:=v_\des(x_i)$ is immediate. If instead $v_\des$ is defined on the numerical grid only, for instance because it comes from the numerical solution of other equations \cite{piccoli2009pfb}, then $v_\des^\text{micro}$ is computed by interpolation. Since we are assuming that all macroscopic quantities are piecewise constant, we coherently choose a zeroth order interpolation.

\item \textit{Overall velocity}. We assemble the previous pieces as 
\begin{align*}
	\tilde{v}^\text{micro}:&=\tilde{v}[\tilde{\mu}_n](P_j^n)\\
	&=v_\des^\text{micro}+
		\theta\tilde{\nu}^\text{micro-for-micro}+(1-\theta)\Lambda\tilde{\nu}^\text{macro-for-micro},
\end{align*}
and analogously
\begin{align*}
	\tilde{v}^\text{macro}:&=\tilde{v}[\tilde{\mu}_n](x_i)\\
	&=v_\des^\text{macro}+
		\theta\tilde{\nu}^\text{micro-for-macro}+(1-\theta)\Lambda\tilde{\nu}^\text{macro-for-macro}.
\end{align*}
\item \textit{Computation of $\Delta t$}. We compute the largest time step $\Delta t$ allowed by condition \eqref{eq.CFL-like} for the macroscopic velocity field $\tilde{v}^\text{macro}$.

\item \textit{Advancing in time}. We update pedestrian positions and density according to Eqs. \eqref{eq.new.pos}, \eqref{eq.num.scheme.macro} by means of $\tilde{v}^\text{micro}$ and $\tilde{v}^\text{macro}$, respectively.
\end{enumerate}

\begin{remark}
No matter what the value of $\theta$ is, the two approaches always coexist. If $\theta=0$ the macroscopic scale is leading, and the microscopic pedestrians are simply driven by the macroscopic velocity field. This is the classical way to see flowing (Lagrangian) particles in a fluid, whose motion was previously computed. Conversely, if $\theta=1$ the microscopic scale is leading, and the evolution of the macroscopic density is reliable only if the number of microscopic pedestrians is sufficiently large.
\end{remark}

\section{Numerical tests}
\label{sect.num.res}
In this section we present the results of numerical simulations performed with the model and the algorithm described above. As natural for pedestrian flows, we deal with two-dimensional ($d=2$) bounded domains, say $\Omega\subset\R^2$, confining the attention to the restriction measure $\mu_t\llcorner\Omega$. This means that the mass possibly flowing out of the domain is considered as lost, \ie, it no longer affects the computation.

Sometimes we will deal with domains with obstacles understood as internal holes. They require a careful handling of the velocity at their boundaries so as to prevent it from pointing inward (which would imply unrealistic outflow of mass). In order to have the mass bypass the obstacles, the velocity \eqref{eq.v} is projected onto a \emph{space of admissible velocities} $\V$, which can be defined in several ways depending on the pedestrian behavior one wants to model. Our choice for the next examples is
\begin{equation*}
	\V=\{v\in\R^d\,:\,v\cdot\mathbf{n}\geq 0\ \text{at every obstacle boundary}\},
\end{equation*}
where $\mathbf{n}$ is the outward normal unit vector at the obstacle boundaries. This corresponds to setting to zero the normal component of the velocity \eqref{eq.v} in case it points into an obstacle. A different possibility is to set to zero both the normal and the tangential component if the first one points into an obstacle. In the former case pedestrians can slide along the obstacle walls following the tangential velocity, whereas in the latter case they remain still against the obstacles until no longer pushed by flowing neighbors. This choice may model, for instance, a more relaxed condition in which walkers are not in a hurry to reach their destination.

Concerning the parameters, we assume $\bar\alpha=\pi/2$ (frontal interaction) in all tests, which is suitable for the most common situations encountered in pedestrian flow. We also assume no attraction between group mates but in the last test (Test 4) in which we model the dynamics of a group of people following a leader. Table \ref{tab.parametri} summarizes the values of all other parameters used in the numerical tests.

\begin{table}[!t]
\caption{Summary of the parameters used in the numerical tests}
\label{tab.parametri}
\begin{center}
\begin{tabular}{|c|c|c|c|c|c|c|c|c|}
\hline
Test N. & $\theta$ & $N$ & $\Lambda$ & $F_r$ & $F_a$ & $R_r$ & $R_a$ \\
\hline
\hline
$1$ & $[0,\,1]$ & $100$ & $10$ & $0.1$ & $0$ & $0.5$ & N/A \\
\hline
$2$ & $[0,\,1]$ & $10,\,100$ & $10$, $100$ & $0.1$ & $0$ & $0.25$ & N/A \\
\hline
$3$ & $0$, $0.3$, $1$ & $30$ & $30$ & $0.1$ & $0$ & see text & N/A \\
\hline
$4$ & $0.3$ & $25+1$ & $80$ & $0.05$ & $0.4$ & $1.5$ & $1.5$ \\
\hline
\end{tabular}
\end{center}
\end{table}

\subsection*{Test 1: Dynamics of the interactions}
In this first test we study the effect of the multiscale coupling on the rearrangement of a crowd subject only to internal repulsion. The goal is to show that it is possible to obtain a perfect correspondence between the microscopic and the macroscopic dynamics in some simple cases, which originally motivated and justified the possibility of a coupled multiscale approach \cite{cristiani2010mso}.

To this purpose we switch the desired velocity off, so that the velocity $v[\mu_t]$ coincides with the interaction velocity $\nu[\mu_t]$. (Actually, in order to compute the angle $\alpha_{xy}$ in Eq. \eqref{eq.nu.tdiscr} we conventionally assume $v_\des$ to be constantly directed along the horizontal axis, so as to define what is ahead and what is behind). Pedestrians are initially arranged in a square-shaped equally-spaced formation, see Fig. \ref{fig.test1.snapshots}a. Due to the frontal repulsion, we expect the frontal part of the group to stand still and the rear part to stand back from the group mates ahead. We compare the expansion dynamics of the group as predicted by the macroscopic ($\theta=0$), the microscopic ($\theta=1$), and the multiscale ($\theta=0.3$) model. The simulation runs until the final time  $T=1$ is reached. Notice that the configuration assumed at that time is not an equilibrium of the system. Results are shown in Figs. \ref{fig.test1.snapshots}b-d.

\begin{figure}[!t]
\begin{center}
\begin{minipage}[c]{0.3\textwidth}
	\centering
	\includegraphics[width=\textwidth]{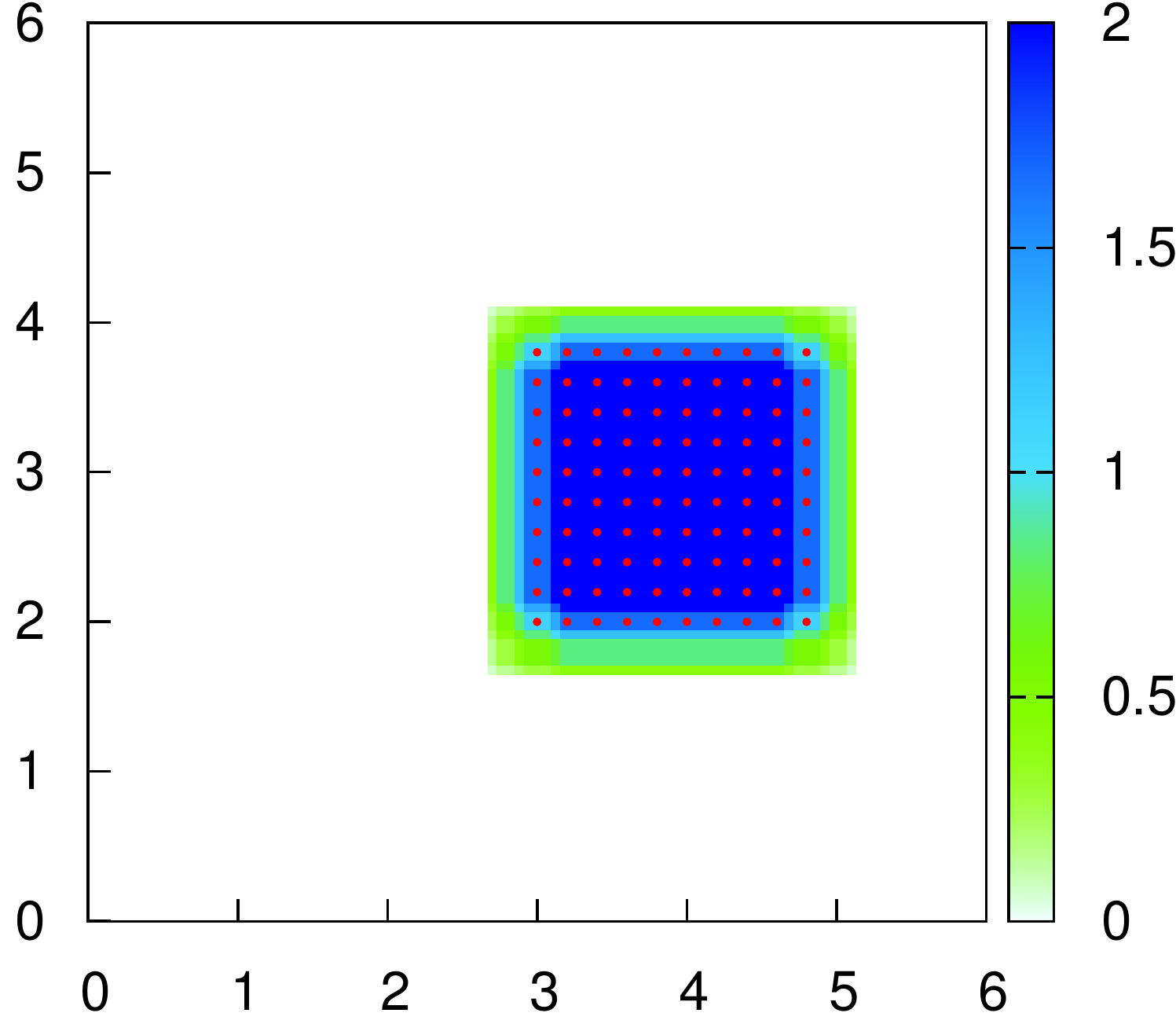} \\
	(a)
\end{minipage}
\qquad
\begin{minipage}[c]{0.3\textwidth}
	\centering
	\includegraphics[width=\textwidth]{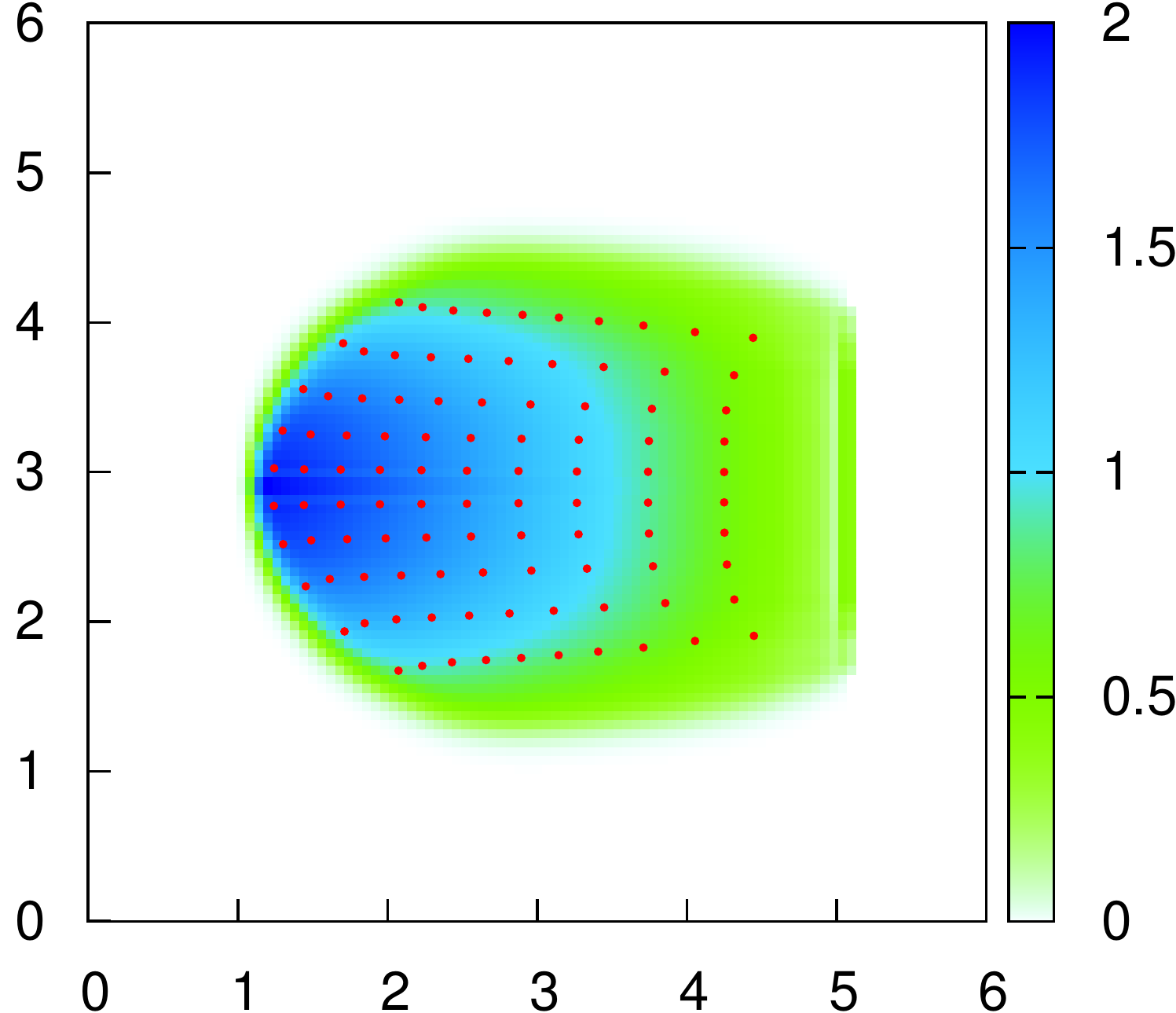} \\
	(b)
\end{minipage} \\[0.3cm]
\begin{minipage}[c]{0.3\textwidth}
	\centering
	\includegraphics[width=\textwidth]{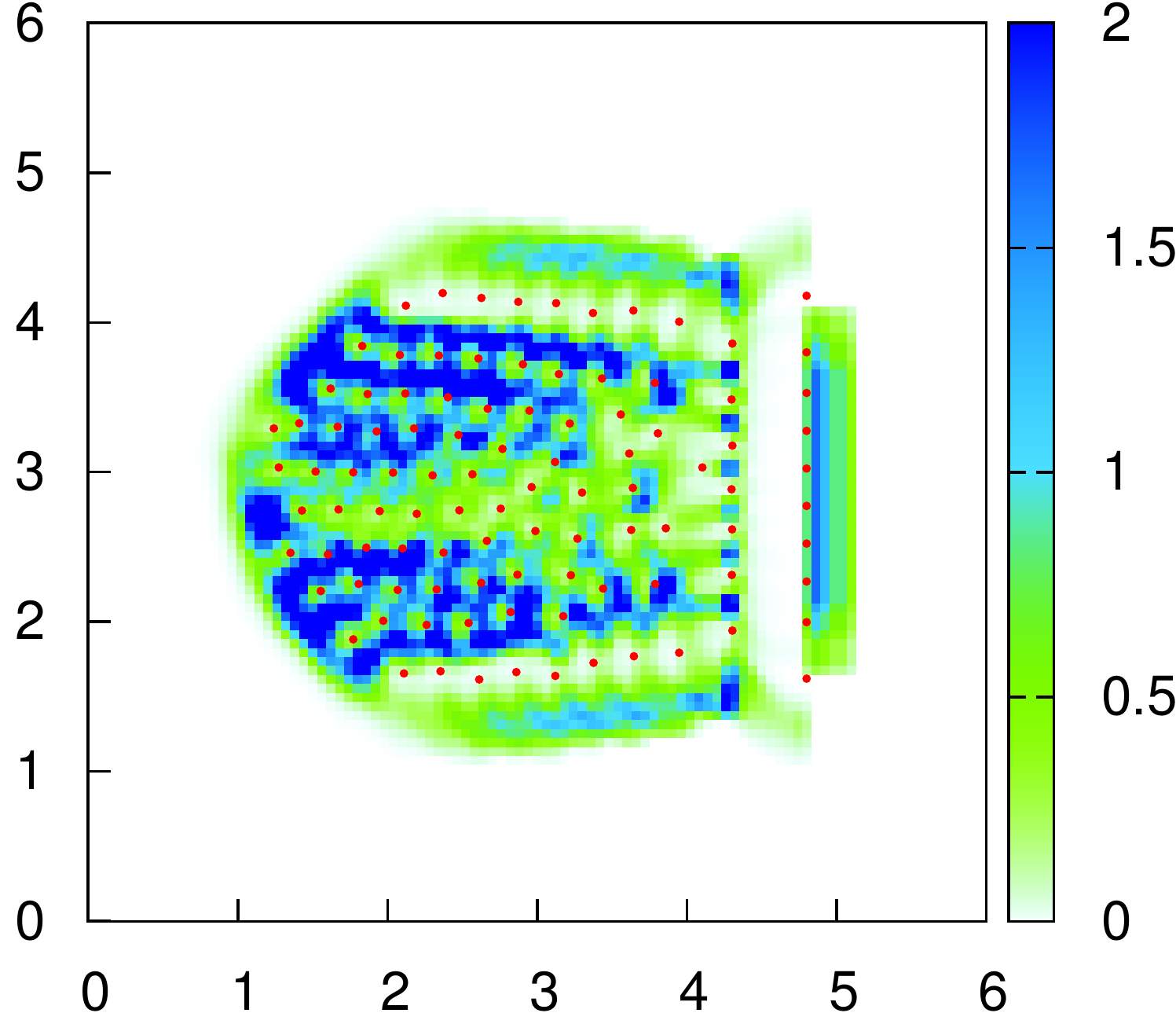} \\
	(c)
\end{minipage}
\qquad
\begin{minipage}[c]{0.3\textwidth}
	\centering
	\includegraphics[width=\textwidth]{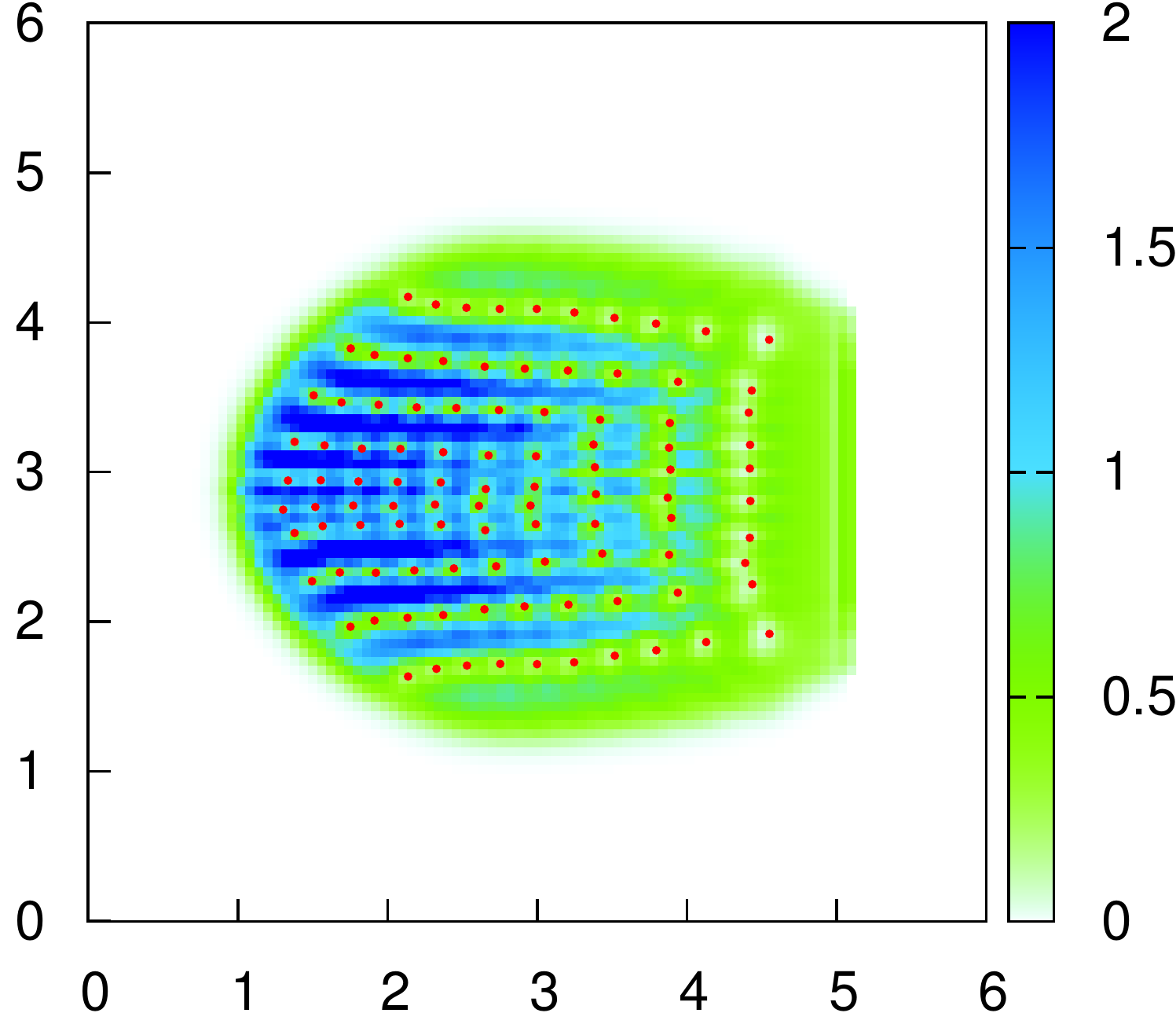} \\
	(d)
\end{minipage}
\caption{Test 1. (a) Initial condition. Crowd distribution at time $T=1$ with (b) the purely macroscopic model, (c) the purely microscopic model, and (d) the multiscale model with $\theta=0.3$}
\label{fig.test1.snapshots}
\end{center}
\end{figure}

The main features of the dynamics outlined above are caught at all scales. In particular, the effect of the only frontal repulsion is visible at the head of the group, where pedestrians stay aligned on a vertical line as they are initially because there is none in front of them. This clearly shows up looking both at the density distribution at the macroscopic scale (Fig. \ref{fig.test1.snapshots}b) and at the individual pedestrians at the microscopic scale (Fig. \ref{fig.test1.snapshots}c).

Of course, this does not mean that either scale has no influence at all on the other. For instance, as an interesting effect of the microscopic scale driving the macroscopic dynamics, we notice some kind of ``density holes'' near every microscopic pedestrian in the limit of the purely microscopic model (Fig. \ref{fig.test1.snapshots}c). They are actually small areas of very low density, caused by that microscopic repulsion has a great impact at the macroscopic scale. Recall indeed that the microscopic granularity is seen as a singularity in the average crowd distribution, and that for $\theta=1$ the evolution of the macroscopic density is fully ruled by the microscopic scale. With the multiscale model (Fig. \ref{fig.test1.snapshots}d with $\theta=0.3$) the hole effect is instead limited, and a good compromise between the two scales is reached. Furthermore, in Fig. \ref{fig.test1.snapshots}d (multiscale) pedestrians are less scattered than in Fig. \ref{fig.test1.snapshots}c (microscopic), meaning that the contribution of the macroscopic scale on the overall dynamics has, in a sense, a homogenizing effect. Conversely, in Fig. \ref{fig.test1.snapshots}d the macroscopic density is more broken than in Fig. \ref{fig.test1.snapshots}b (macroscopic), thus the microscopic scale destroys the macroscopic smoothness and introduces a non-negligible granular effect in the overall dynamics.

To investigate more in depth the intercorrelation between the scales we consider now how the moments of inertia of the mass distribution depend on the coupling parameter $\theta$, fixing all other parameters as indicated in Table \ref{tab.parametri}. Indeed from classical mechanics it is known that moments of inertia provide quantitative information on the shape of the group.

Let $x_G$ be the center of mass of the crowd at the final time $T$:
\begin{equation*}
	x_G=\frac{1}{\mu_T(\Omega)}\lint_\Omega x\,d\mu_T(x),
\end{equation*}
then we consider the following three moments of inertia around $x_G$:
\begin{equation*}
	I_1^\mu=\lint_\Omega\vert(x-x_G)\cdot\mathbf{i}\vert^2\,d\mu_T(x), \quad
	I_2^\mu=\lint_\Omega\vert(x-x_G)\cdot\mathbf{j}\vert^2\,d\mu_T(x), \quad
	I_G^\mu=I_1^\mu+I_2^\mu,
\end{equation*}
$\mathbf{i}$, $\mathbf{j}$ being the unit vectors in the direction of the horizontal and vertical axis, respectively. $I_1^\mu$ and $I_2^\mu$ refer to stretching or shrinking of the group in the horizontal and vertical direction, respectively, whereas $I_G^\mu$ accounts for the global distribution of the crowd around its center of mass. By replacing $\mu_T$ in the above formulas with the measure $m_T$ ($M_T$, resp.) it is possible to study the analogous moments of inertia of the sole microscopic (macroscopic, resp.) mass, that we denote by $I_{1,\,2,\,G}^m$ ($I_{1,\,2,\,G}^M$, resp.).

\begin{figure}[!t]
\begin{center}
\begin{minipage}[c]{0.3\textwidth}
	\centering
	\includegraphics[width=\textwidth]{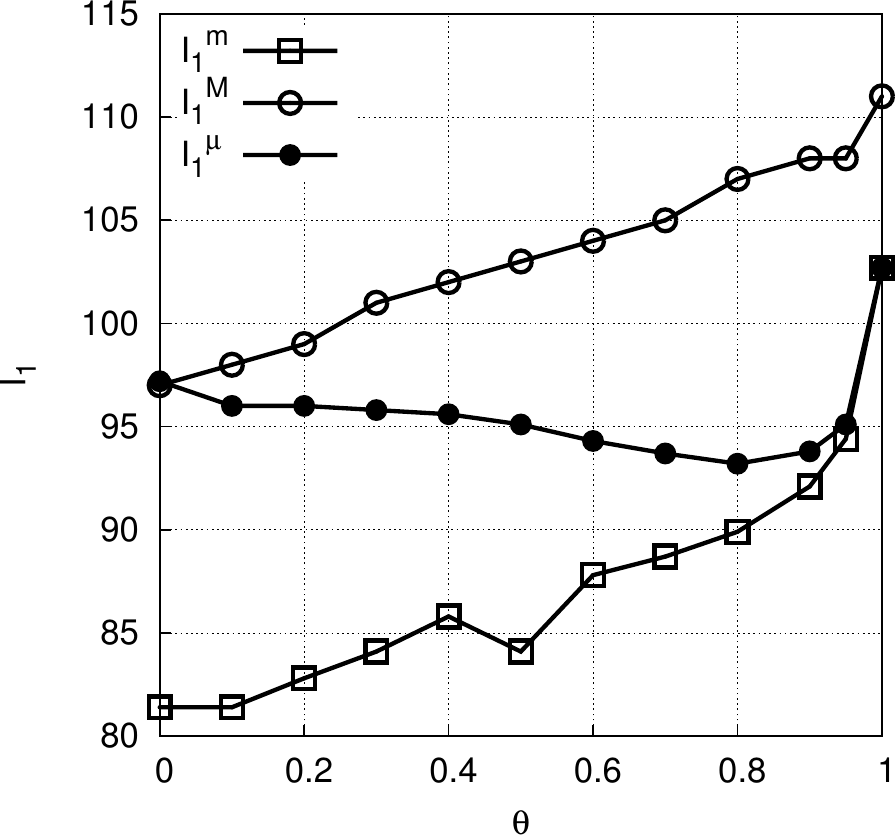} \\
	(a)
\end{minipage}
\hspace{0.3cm}
\begin{minipage}[c]{0.3\textwidth}
	\centering
	\includegraphics[width=\textwidth]{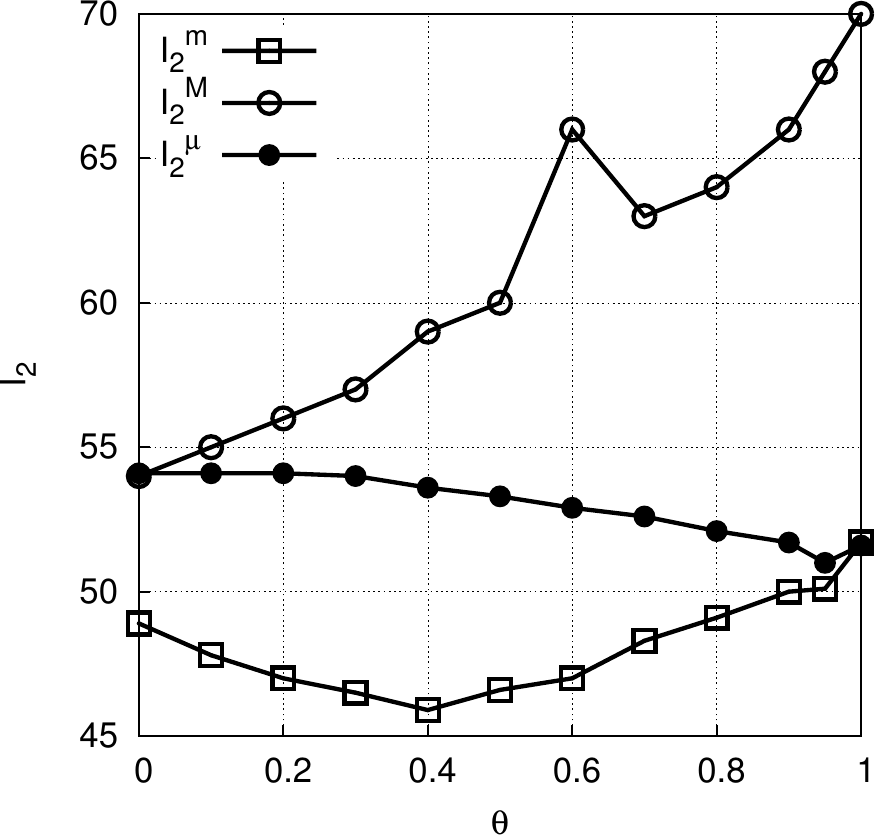} \\
	(b)
\end{minipage}
\hspace{0.3cm}
\begin{minipage}[c]{0.3\textwidth}
	\centering
	\includegraphics[width=\textwidth]{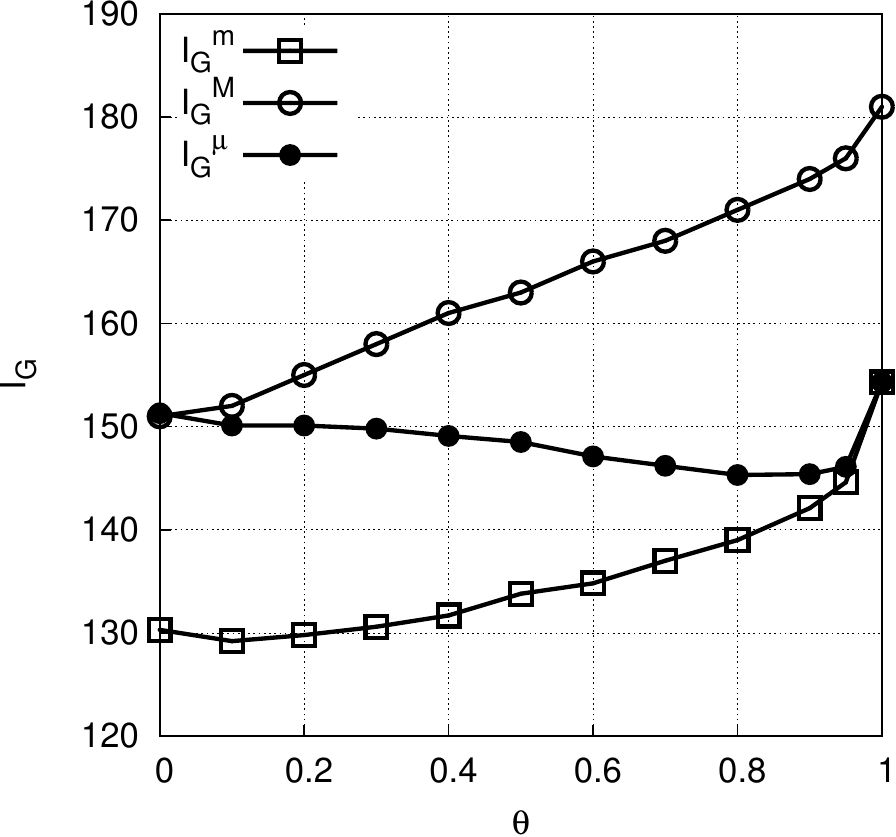} \\
	(c)
\end{minipage}
\caption{Test 1. Moments of inertia of the crowd distribution as functions of $\theta$: (a) $I_1^{m,\,M,\,\mu}$, (b) $I_2^{m,\,M,\,\mu}$, (c) $I_G^{m,\,M,\,\mu}$}
\label{fig.test1.indicators}
\end{center}
\end{figure}
\begin{figure}[!t]
\begin{center}
\begin{minipage}[c]{0.3\textwidth}
	\centering
	\includegraphics[width=\textwidth]{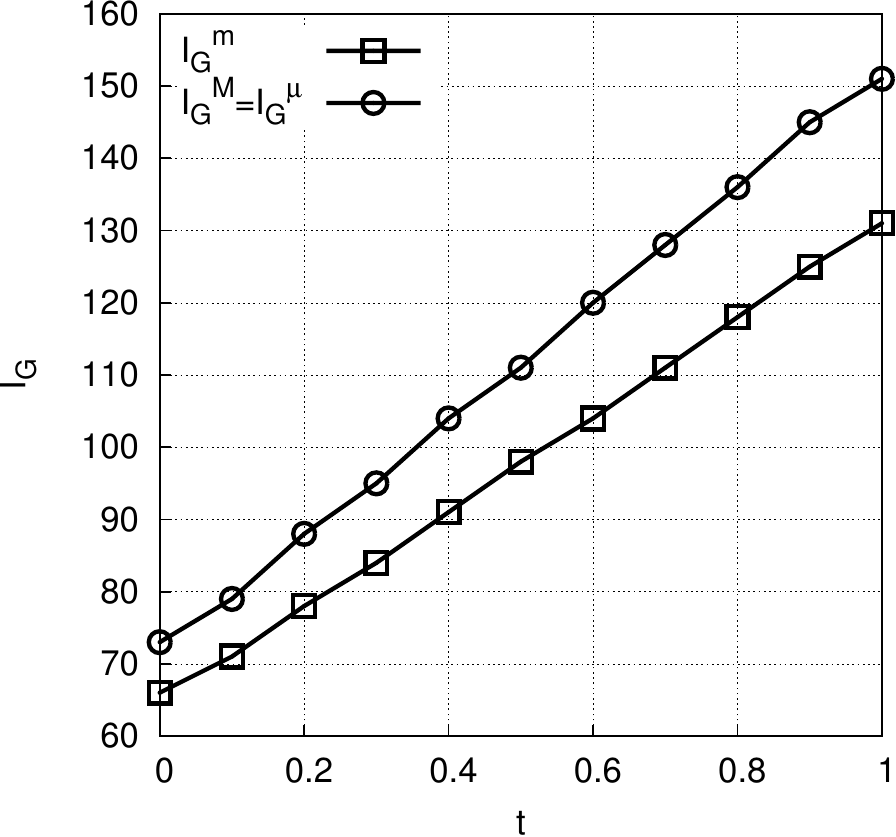} \\
	(a)
\end{minipage}
\hspace{0.3cm}
\begin{minipage}[c]{0.3\textwidth}
	\centering
	\includegraphics[width=\textwidth]{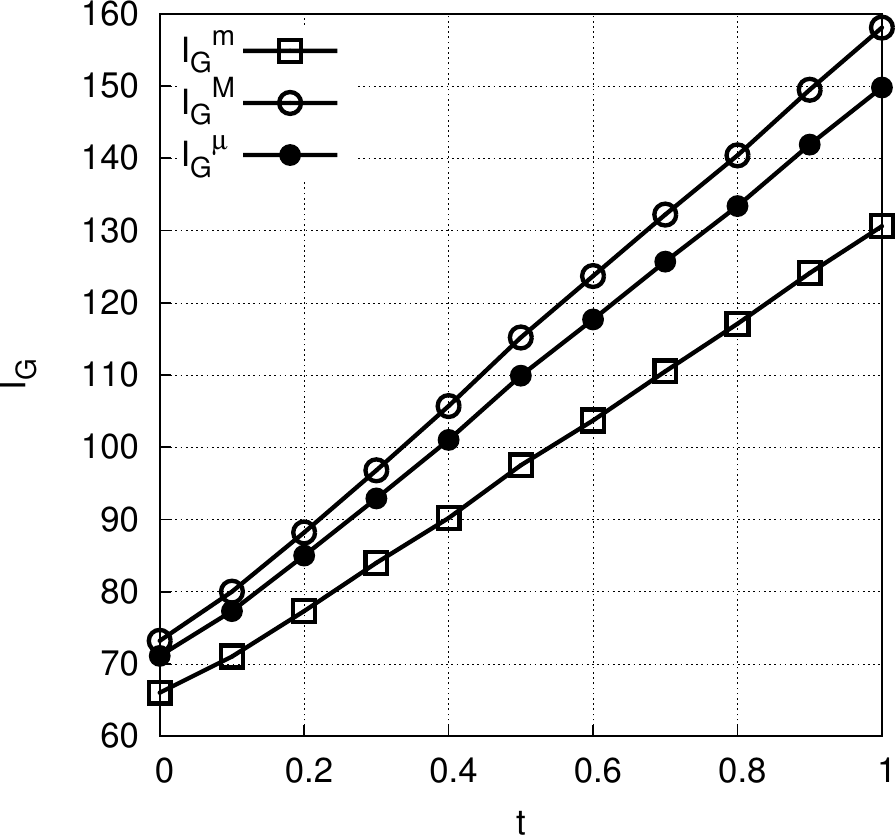} \\
	(b)
\end{minipage}
\hspace{0.3cm}
\begin{minipage}[c]{0.3\textwidth}
	\centering
	\includegraphics[width=\textwidth]{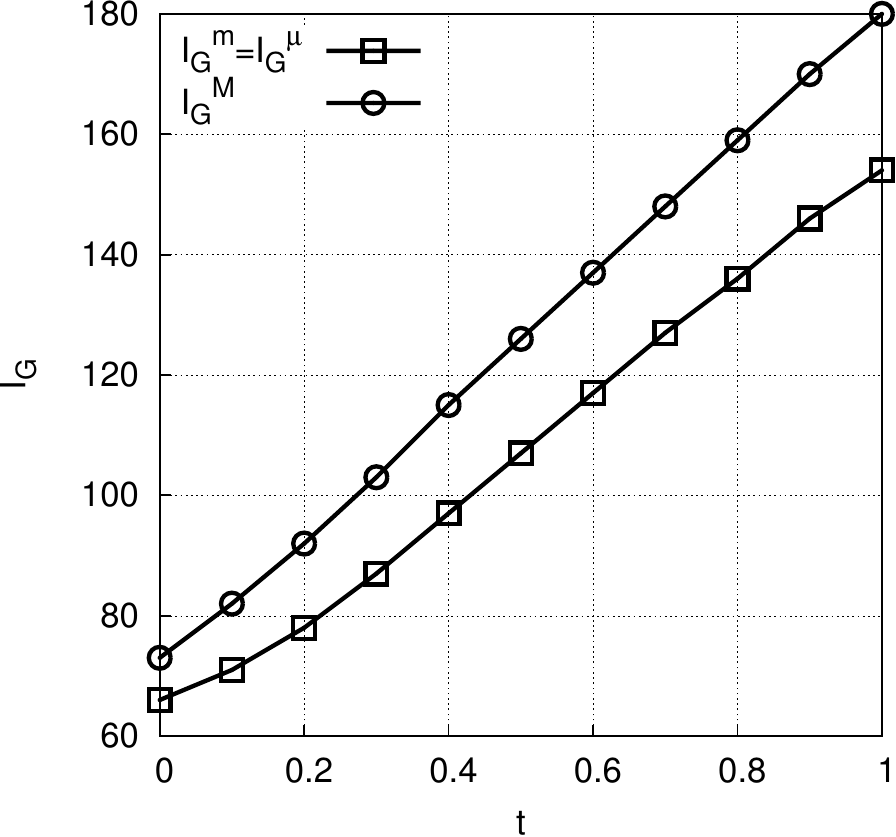} \\
	(c)
\end{minipage}
\caption{Test 1. Moments of inertia $I_G^m$, $I_G^M$, $I_G^{\mu}$ as functions of time for (a) $\theta=0$, (b) $\theta=0.3$, (c) $\theta=1$}
\label{fig.test1.indicators.time}
\end{center}
\end{figure}

The graphs of Fig. \ref{fig.test1.indicators} show the trend of the functions $\theta\mapsto I_1^{m,\,M,\,\mu}$, $\theta\mapsto I_2^{m,\,M,\,\mu}$, and $\theta\mapsto I_G^{m,\,M,\,\mu}$. Notice that, due to Eq. \eqref{eq.mu.nondim}, the moments of inertia of the multiscale mass are linear interpolations of the corresponding moments of inertia of the microscopic and the macroscopic masses. The latter are therefore also plotted in the graphs for reference. The most relevant fact is that the multiscale moments of inertia are almost constant with respect to $\theta$ (aside from small border effects, especially about $\theta=1$), which indicates that the rearrangement of the mass is basically the same at all scales. Therefore the microscopic and the macroscopic dynamics arising from pedestrian interactions \emph{are compatible with each other and make it possible a coupled approach by scale interpolation}.

The graphs of Fig. \ref{fig.test1.indicators.time} show the trend of the mappings $t\mapsto I_G^{m,\,M,\,\mu}$ for the three values of $\theta$ used in Fig. \ref{fig.test1.snapshots}, namely $\theta=0$ (Fig. \ref{fig.test1.indicators.time}a), $\theta=0.3$ (Fig. \ref{fig.test1.indicators.time}b), and $\theta=1$ (Fig. \ref{fig.test1.indicators.time}c). As pointed out in the Remark at the end of Section \ref{sect.num.approx}, the microscopic and the macroscopic scale always coexist and exchange information. In particular, by comparing Figs. \ref{fig.test1.indicators.time}a and \ref{fig.test1.indicators.time}c it can be noticed how \emph{the scale coupling realized in the model produces coherent results at both scales even when the dynamics is fully ruled by either of them only}. No significant qualitative and quantitative differences are observed in both the fully macroscopic ($\theta=0$, Fig. \ref{fig.test1.indicators.time}a) and the fully microscopic ($\theta=1$, Fig. \ref{fig.test1.indicators.time}c) case, meaning that there is no detachment of the two evolutions even when only one of them is actually the leading one. If this might be quite classical for the (Lagrangian) evolution of microscopic particles driven by a macroscopic flow, we stress that it is definitely by far less classical and obvious for the (Eulerian) evolution of a macroscopic flow driven by microscopic particles.

\subsection*{Test 2: Average outflow time}
In this test we address the case of a crowd leaving a room through a door in normal (\ie, no panic) conditions, and we investigate the influence of the coupled microscopic and macroscopic effects on the estimated average outflow time. This will provide meaningful insights into the way in which the microscopic granularity works within the macroscopic flow. The scenario of the simulation is depicted in Fig. \ref{fig.test2}a for the parameters listed in Table \ref{tab.parametri}. The (dimensionless) door width is $0.5$.

Let $\hat{\Omega}:=[0,\,3]\times [0,\,4]$ be the room that pedestrians are leaving. We consider the following average outflow time:
\begin{equation*}
	T_\ave^\mu:=\frac{1}{\mu_0(\hat{\Omega})}\lint_0^T t\mathcal{F}(t)\,dt,
\end{equation*}
where $\mathcal{F}(t)$ is the integral flux through the door (taken positive when outgoing) at time $t$. The final time $T>0$ is chosen so large that the room is definitely empty, \ie, $\mu_T(\hat{\Omega})=0$. Considering that $\mathcal{F}(t)=-\frac{d}{dt}\mu_t(\hat{\Omega})$ and using integration-by-parts, $T_\ave^\mu$ can be given the following form:
\begin{equation}
	T_\ave^\mu=\frac{1}{\mu_0(\hat{\Omega})}\lint_0^T\mu_t(\hat{\Omega})\,dt,
	\label{eq.test2.avetime}
\end{equation}
hence it is actually an outflow time weighted by the percent mass of crowd that, at each time instant, still has to leave the room.

The graphs of Figs. \ref{fig.test2}b, c show the trend of the function $\theta\mapsto T_\ave^\mu$ for a small crowd of $10$ pedestrians and a large crowd of $100$ pedestrians. In both cases, the two further curves $\theta\mapsto T_\ave^m$, $\theta\mapsto T_\ave^M$, computed by replacing $\mu_t$ in Eq. \eqref{eq.test2.avetime} with either $m_t$ or $M_t$, are plotted for suitable reference. Again, due to Eq. \eqref{eq.mu.nondim}, the multiscale average outflow time \eqref{eq.test2.avetime} is the linear interpolation of the corresponding microscopic and macroscopic times via the parameter $\theta$.

\begin{figure}[!t]
\begin{center}
\begin{minipage}[b]{0.3\textwidth}
	\centering
	\includegraphics[width=\textwidth]{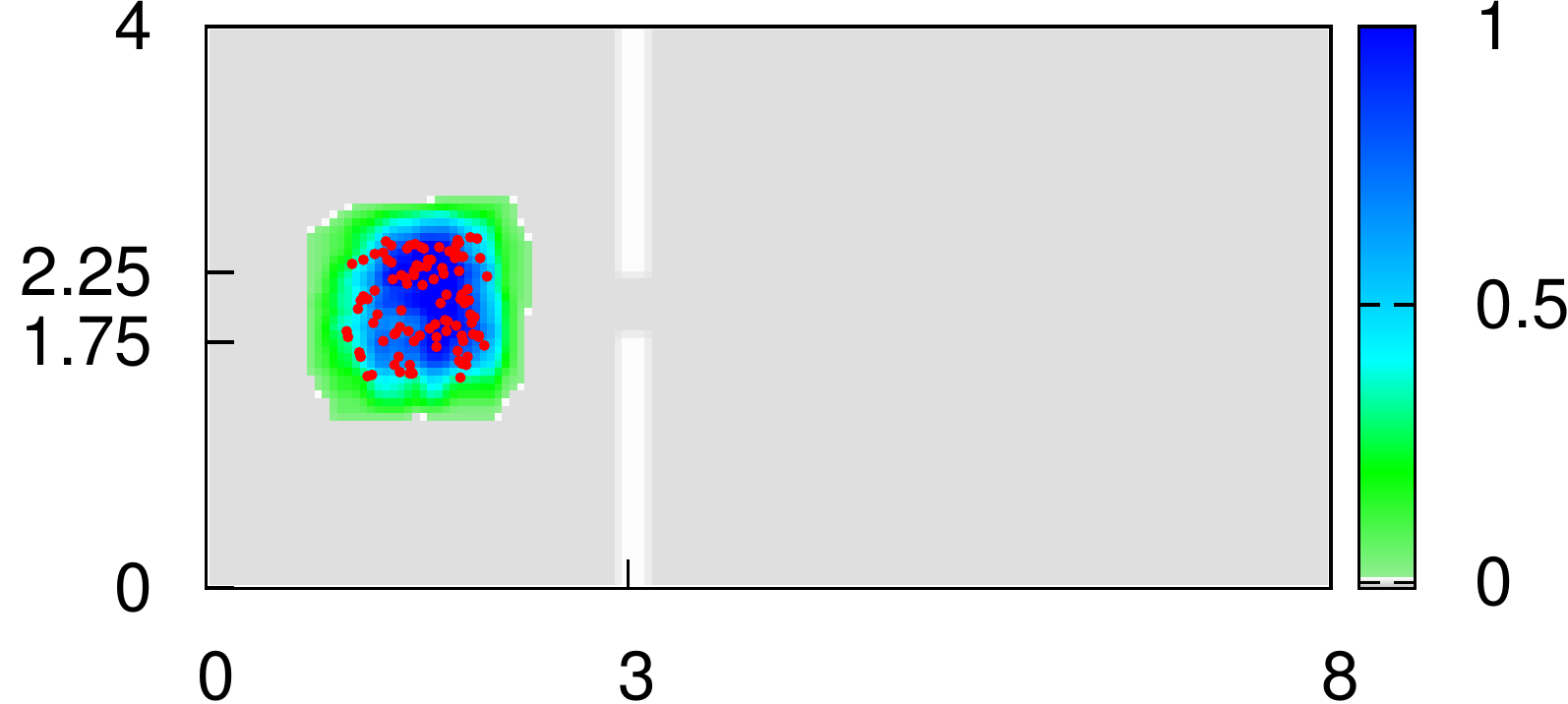} \\[0.3cm]
	\includegraphics[width=\textwidth]{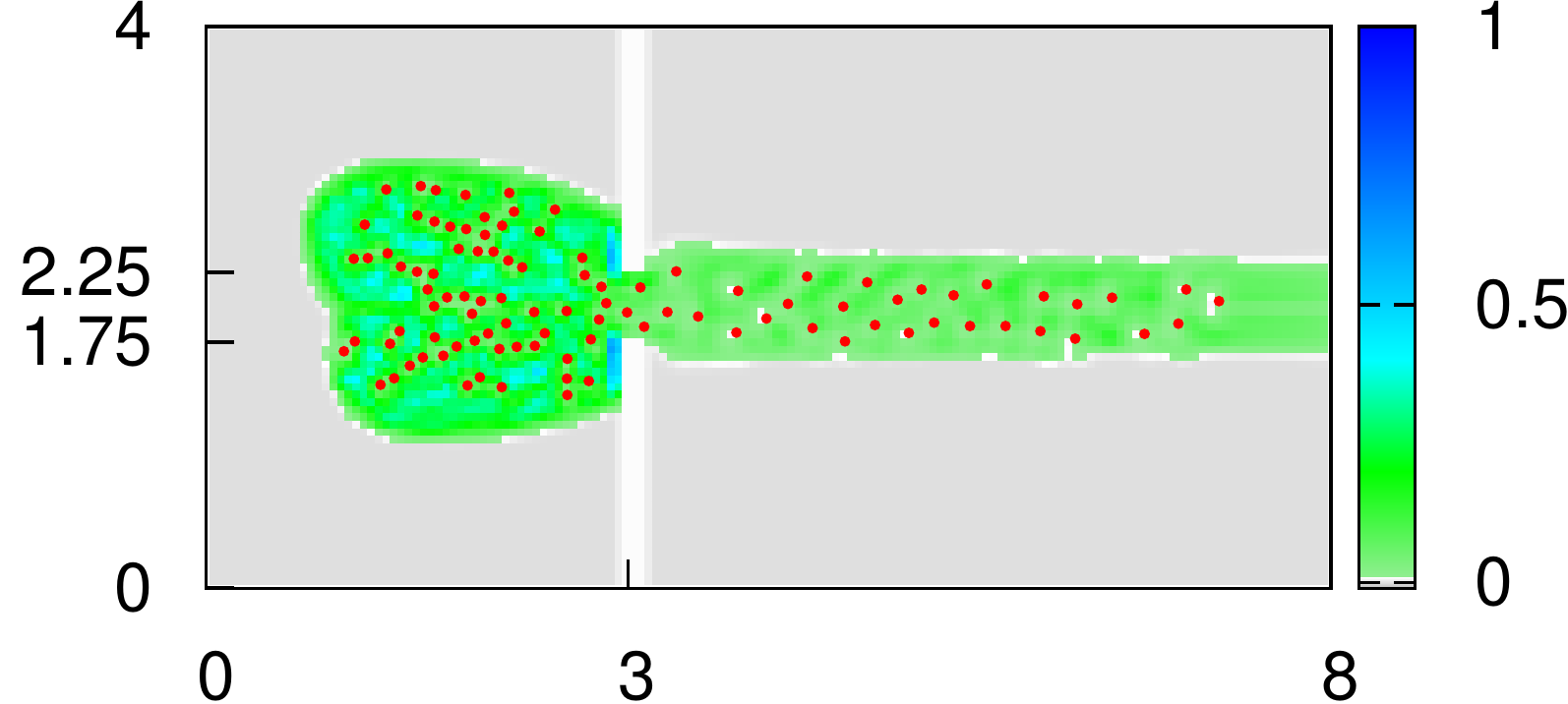} \\
	(a)
\end{minipage}
\quad
\begin{minipage}[b]{0.3\textwidth}
	\centering
	\includegraphics[width=\textwidth]{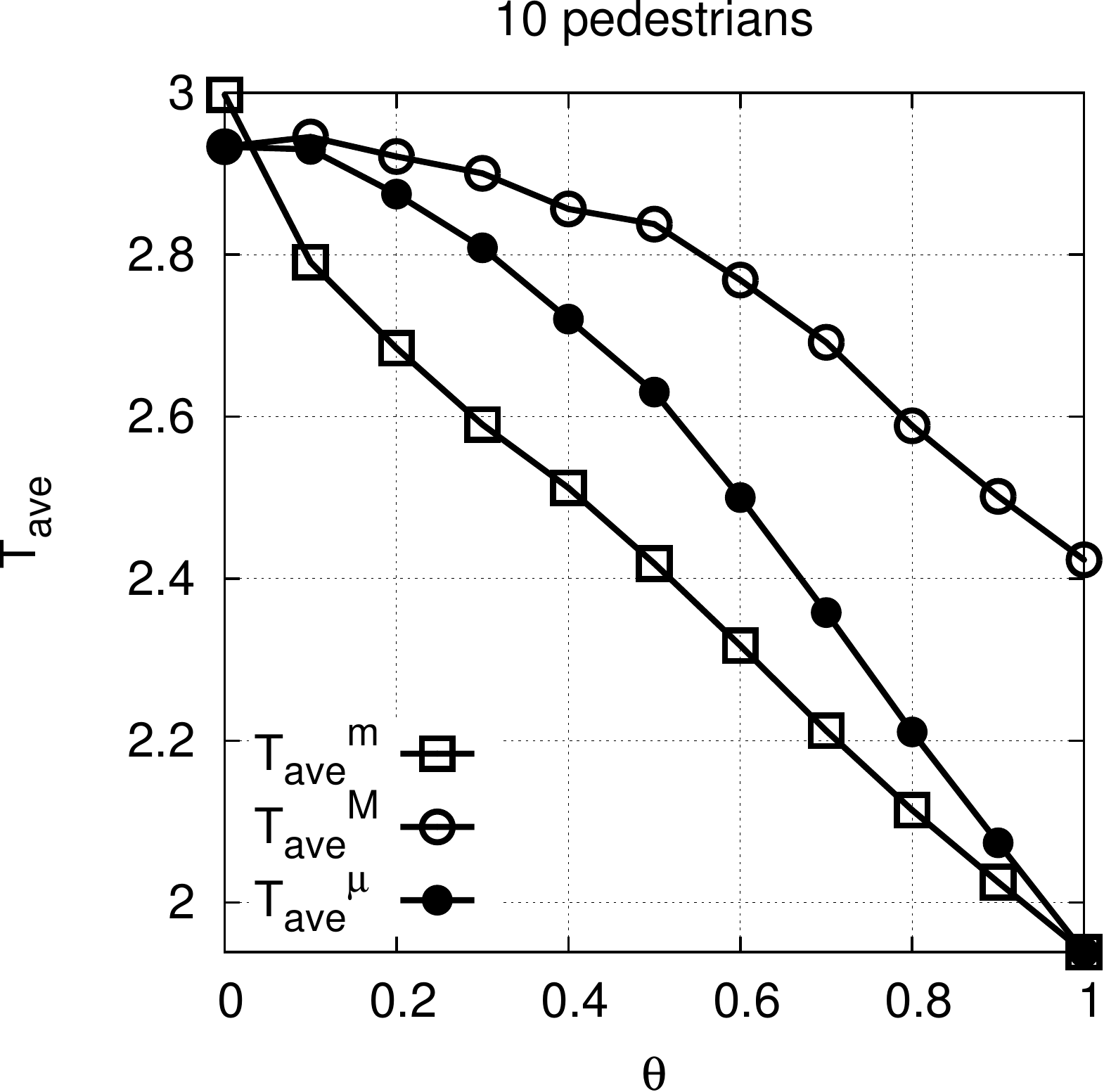} \\
	(b)
\end{minipage}
\quad
\begin{minipage}[b]{0.3\textwidth}
	\centering
	\includegraphics[width=0.96\textwidth]{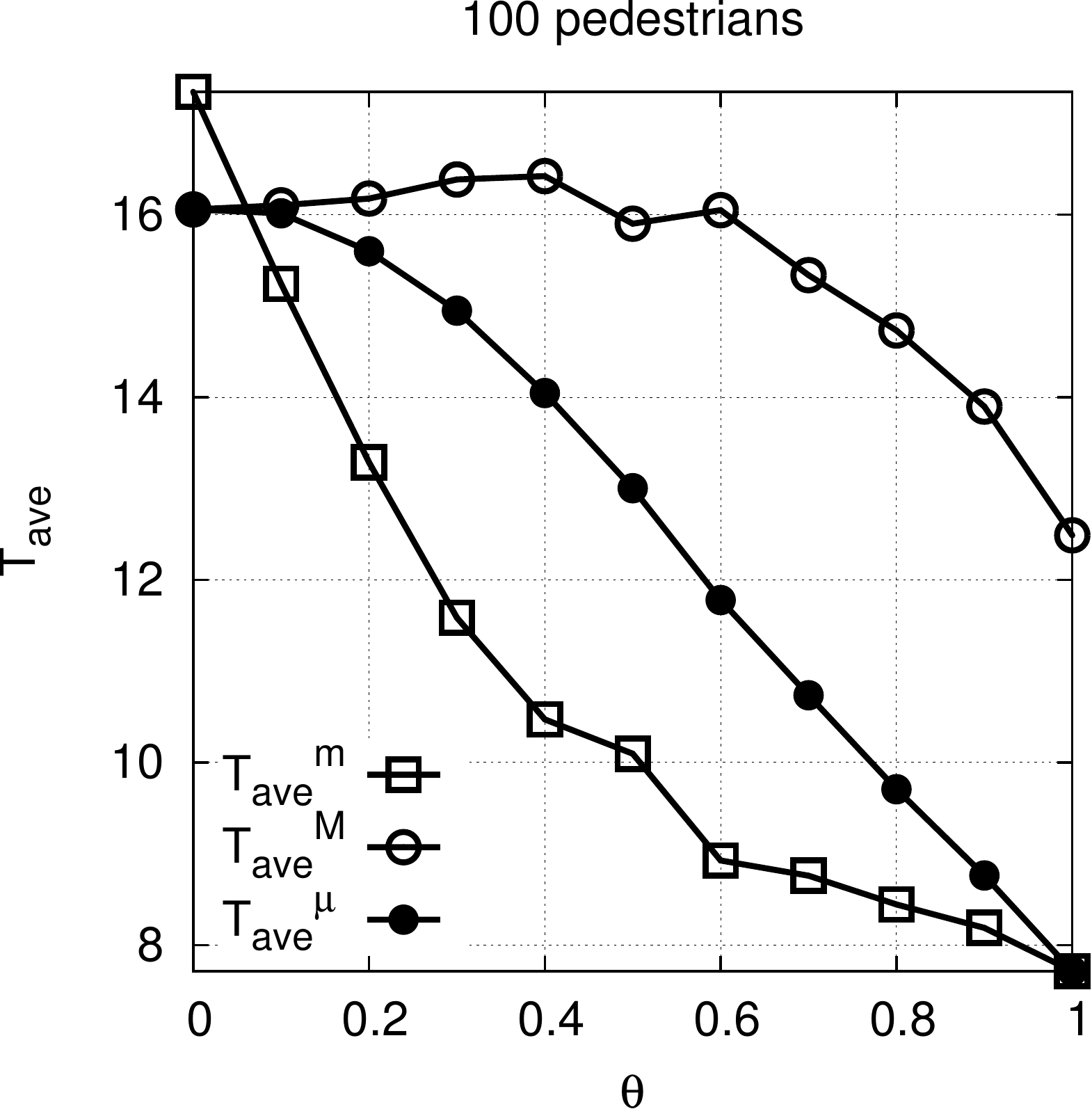} \\
	(c)
\end{minipage}
\caption{Test 2. (a) A crowd leaving a room through a door, initial condition (top) and underway outflow (bottom). (b) Average outflow time as a function of $\theta$ for a crowd of $10$ pedestrians and (c) $100$ pedestrians}
\label{fig.test2}
\end{center}
\end{figure}

The trend of the $T_\ave$'s is qualitatively similar for both the small and the large crowd, in particular it is decreasing with $\theta$. This elucidates the role played by a more and more influential microscopic granularity within the macroscopic flow: the more the multiscale coupling is biased toward the microscopic scale, the more fluent the crowd stream becomes (and consequently the average outflow time decreases). This is justifiable considering that $\theta$ can be viewed as the percent mass shifted from the macroscopic to the microscopic scale in consequence of the multiscale coupling. Subtracting interacting macroscopic mass from the system progressively reduces the action of the macroscopic interactions while enhancing that of the microscopic ones. Since the latter are less distributed in space, because the microscopic mass is clustered in point singularities, this ultimately results in fewer deviations from the desired velocity and the desired paths.

\subsection*{Test 3: Pedestrian flow through a bottleneck}
In this test we investigate the ability of the multiscale model to reproduce several flow conditions occurring when two groups of pedestrians, walking toward one another, share a narrow passage (e.g., a door).

From the modeling point of view it is necessary to handle two interacting populations of walkers, which will be done via two mass measures $\mu_t^p$, $p=1,\,2$, each obeying Eq. \eqref{eq.cons.mass.strong}. Either population has its own desired velocity $v_\des^p$ and interacts with the opposite population through the interaction velocity, which now depends on both the $\mu_t^p$'s. Specifically, denoting by $p^\ast$ the conjugate index\footnote{That is, $p^\ast=2$ if $p=1$ and $p^\ast=1$ if $p=2$.} of population $p$, we set
\begin{equation}
	\nu^p[\mu_t^p,\,\mu_t^{p^\ast}]=(1-\Theta)\nu^p[\mu_t^p]+\Theta\nu^{pp^\ast}[\mu_t^{p^\ast}],
	\label{eq.nu.2pop}
\end{equation}
where:
\begin{itemize}
\item $\nu^p[\mu_t^p]$ is the \emph{endogenous interaction}, \ie, the interaction with pedestrians of one's own population;
\item $\nu^{pp^\ast}[\mu_t^{p^\ast}]$ is the \emph{exogenous interaction}, \ie, the interaction with pedestrians of the opposite population;
\item $\Theta\in[0,\,1]$ is a dimensionless number fixing the strength of the exogenous against the endogenous interaction.
\end{itemize}
Both the endogenous and the exogenous interaction velocities are formally computed as in Eq. \eqref{eq.nu}, except that the exogenous one is integrated with respect to $\mu_t^{p^\ast}$. In addition, the exogenous interaction radius $R^{pp^\ast}$ need not be the same as the endogenous one $R^p$ if interactions with opposite walkers require more promptness than interactions with group mates\footnote{This simply corresponds to the function $f$ having different supports in the expressions of $\nu^p$ and $\nu^{pp^\ast}$.}.

For this test we let $\Theta=0.65$, thus $65\%$ repulsion is exogenous and $35\%$ is endogenous. Repulsion radii are $R_r^p=0.2$, $R_r^{pp^\ast}=0.35$, to be compared with the unit width of the narrow passage. Other relevant parameters are listed in Table \ref{tab.parametri}. The setting of the problem is displayed in the snapshots of Figs. \ref{fig.test3.theta0}, \ref{fig.test3.theta1}, \ref{fig.test3.theta03.snapshots}, in particular the blue crowd with red microscopic pedestrians, say population $1$, walks rightward whereas the yellow one with green microscopic pedestrians, say population $2$, walks leftward.

\begin{figure}[!t]
\begin{center}
\begin{minipage}[b]{0.6\textwidth}
	\centering
	\includegraphics[width=0.8\textwidth]{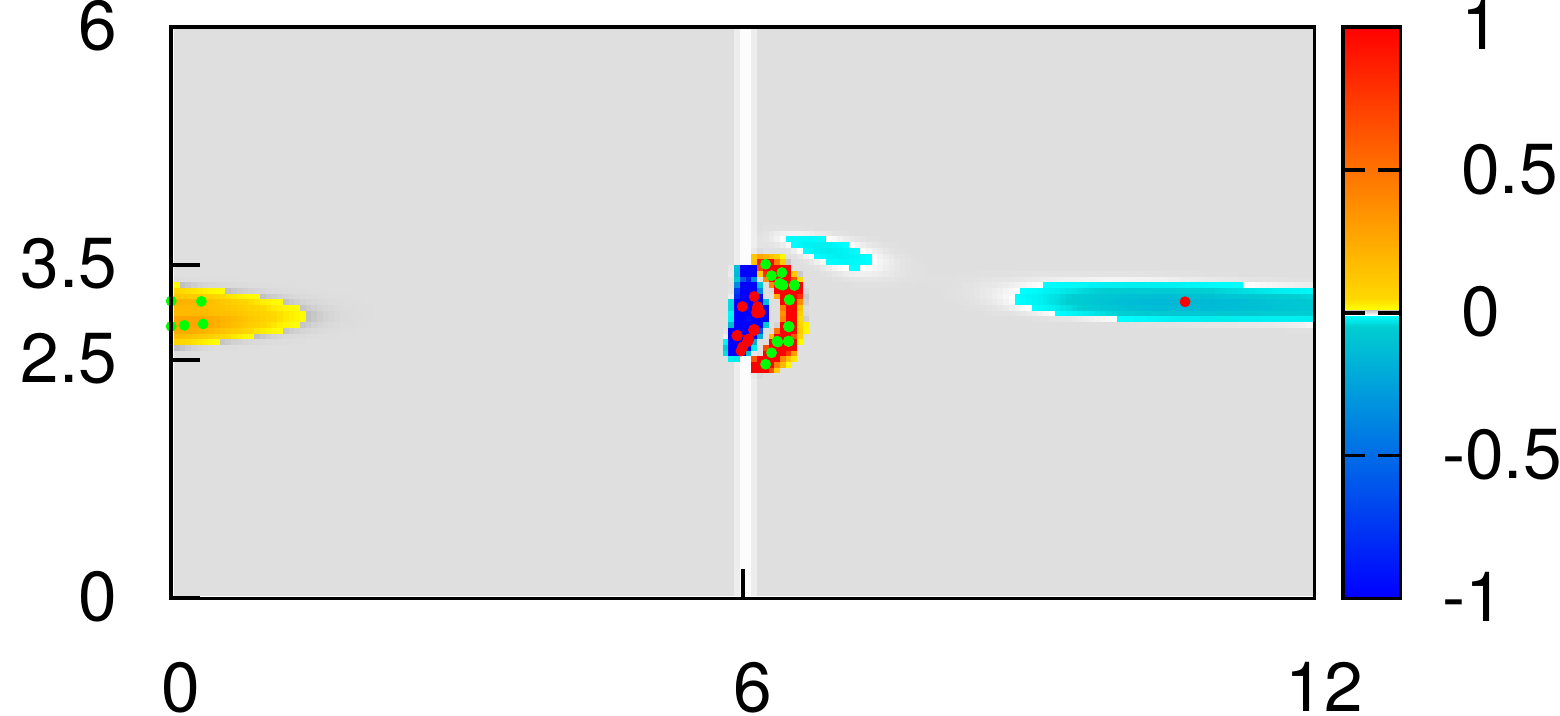} \\
	(a)
\end{minipage}
\begin{minipage}[b]{0.3\textwidth}
	\centering
	\includegraphics[width=\textwidth]{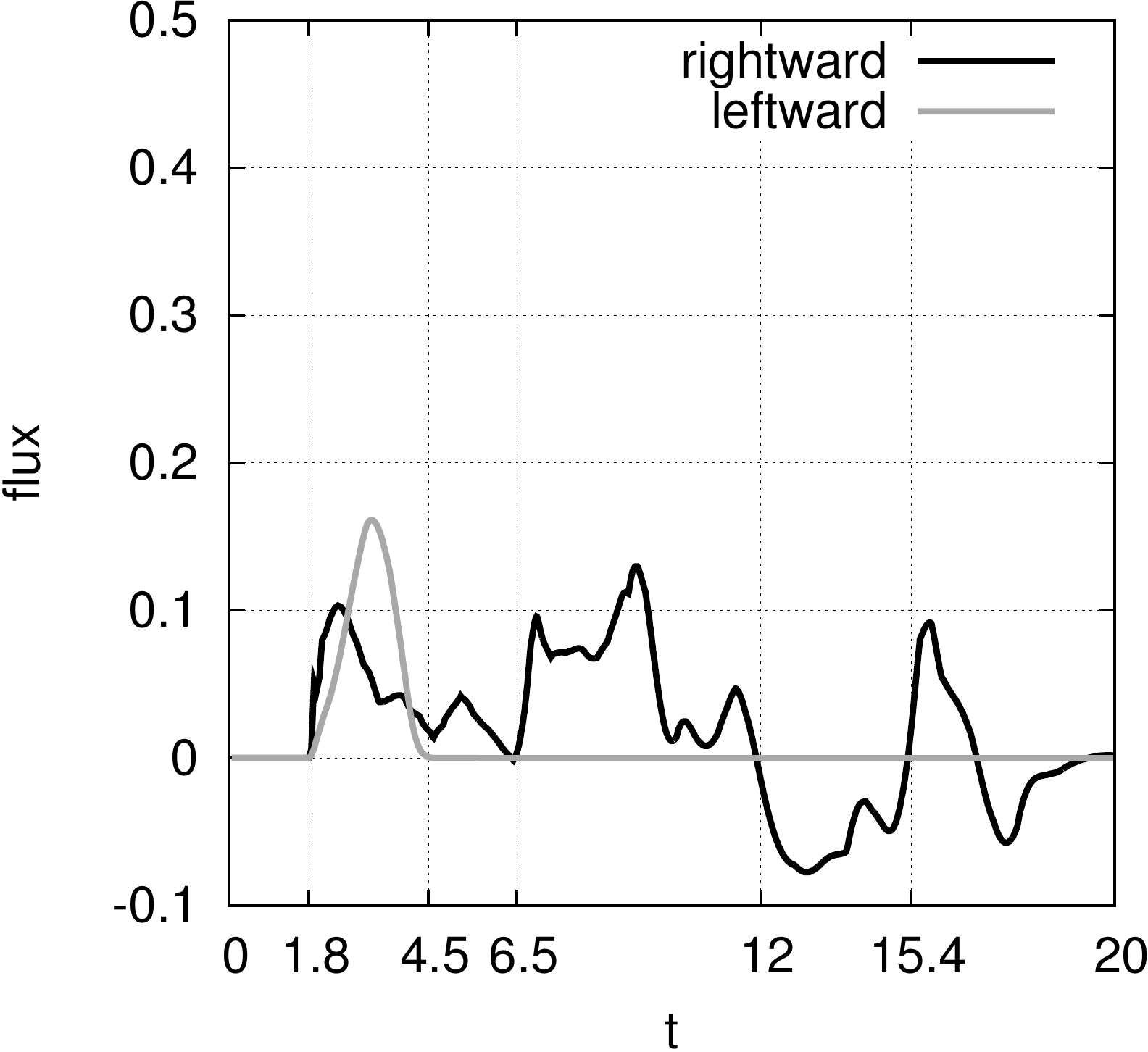} \\
	(b)
\end{minipage}
\caption{Test 3. Clogging at the bottleneck (left) and corresponding macroscopic fluxes (right) arising with the fully macroscopic dynamics}
\label{fig.test3.theta0}
\end{center}
\end{figure}

\begin{figure}[!t]
\begin{center}
\begin{minipage}[c]{0.3\textwidth}
	\centering
	\includegraphics[width=\textwidth]{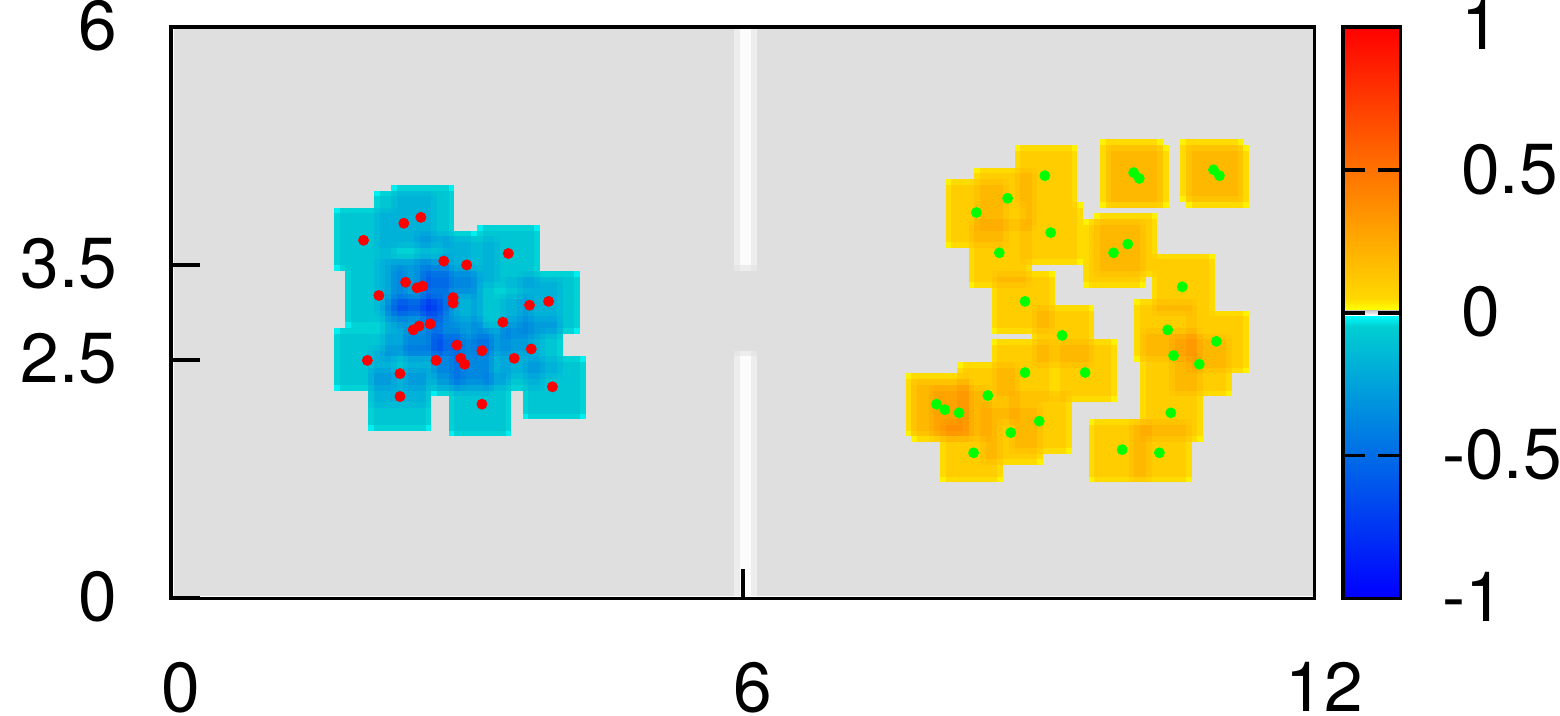} \\
	(a)
\end{minipage}
\begin{minipage}[c]{0.3\textwidth}
	\centering
	\includegraphics[width=\textwidth]{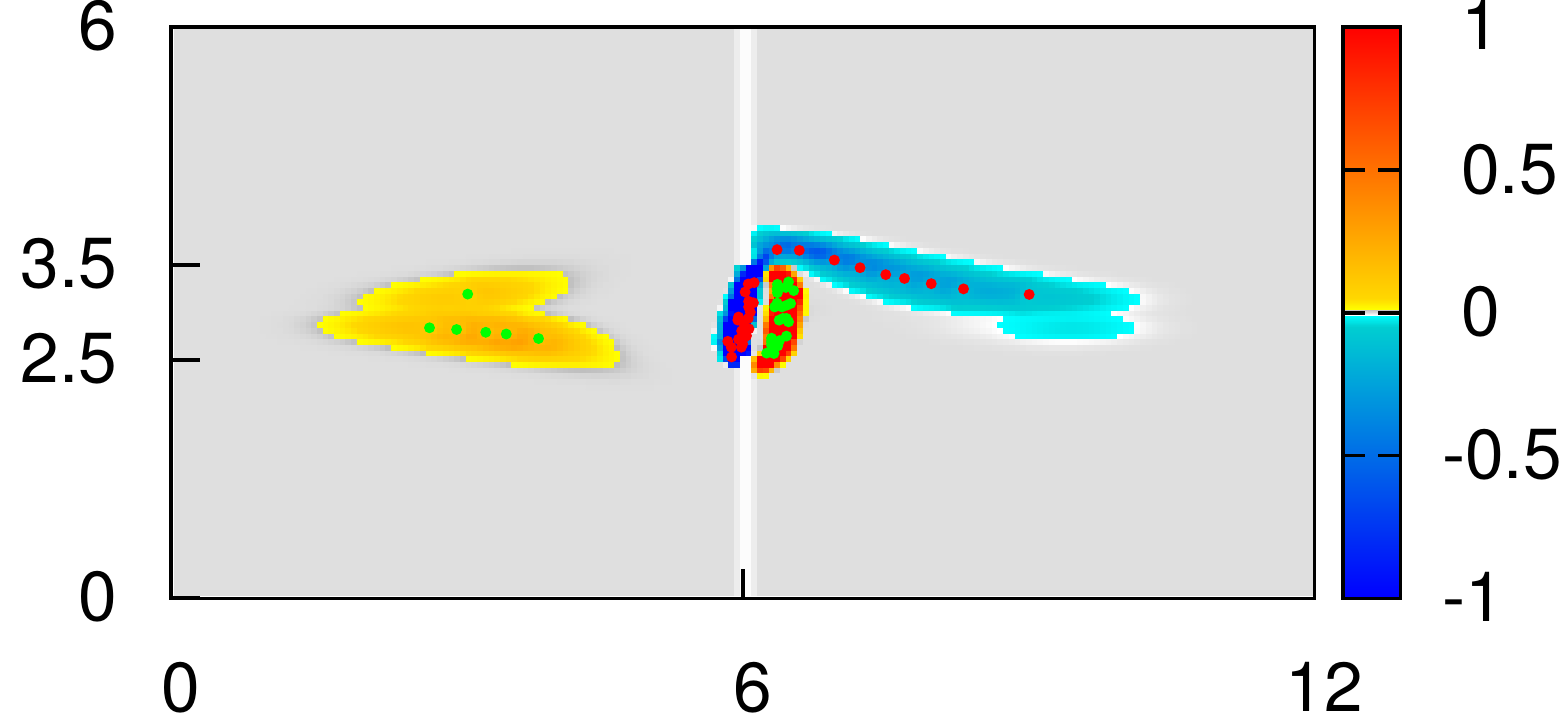} \\
	(b)
\end{minipage}
\begin{minipage}[c]{0.3\textwidth}
	\centering
	\includegraphics[width=\textwidth]{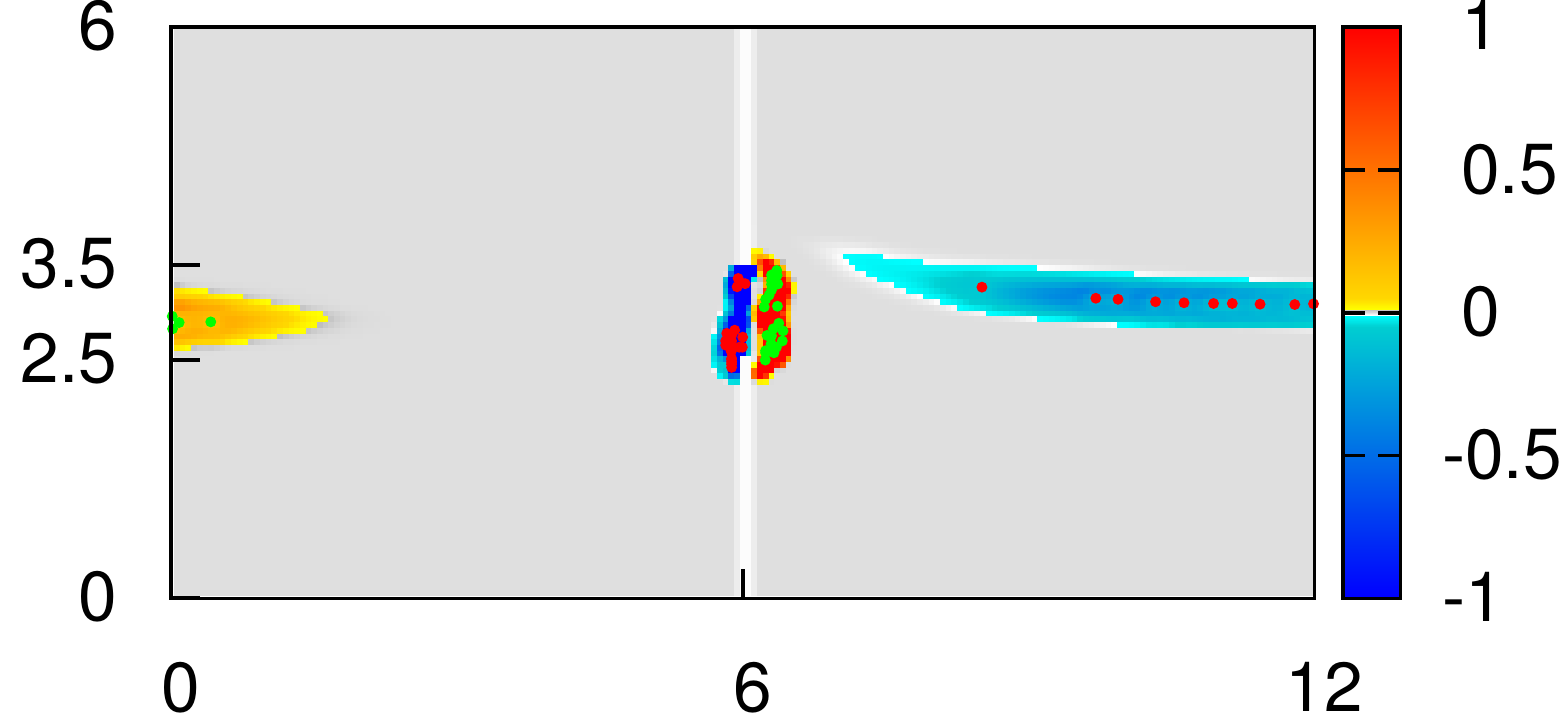} \\
	(c)
\end{minipage} \\[0.3cm]
\begin{minipage}[c]{0.3\textwidth}
	\centering
	\includegraphics[width=\textwidth]{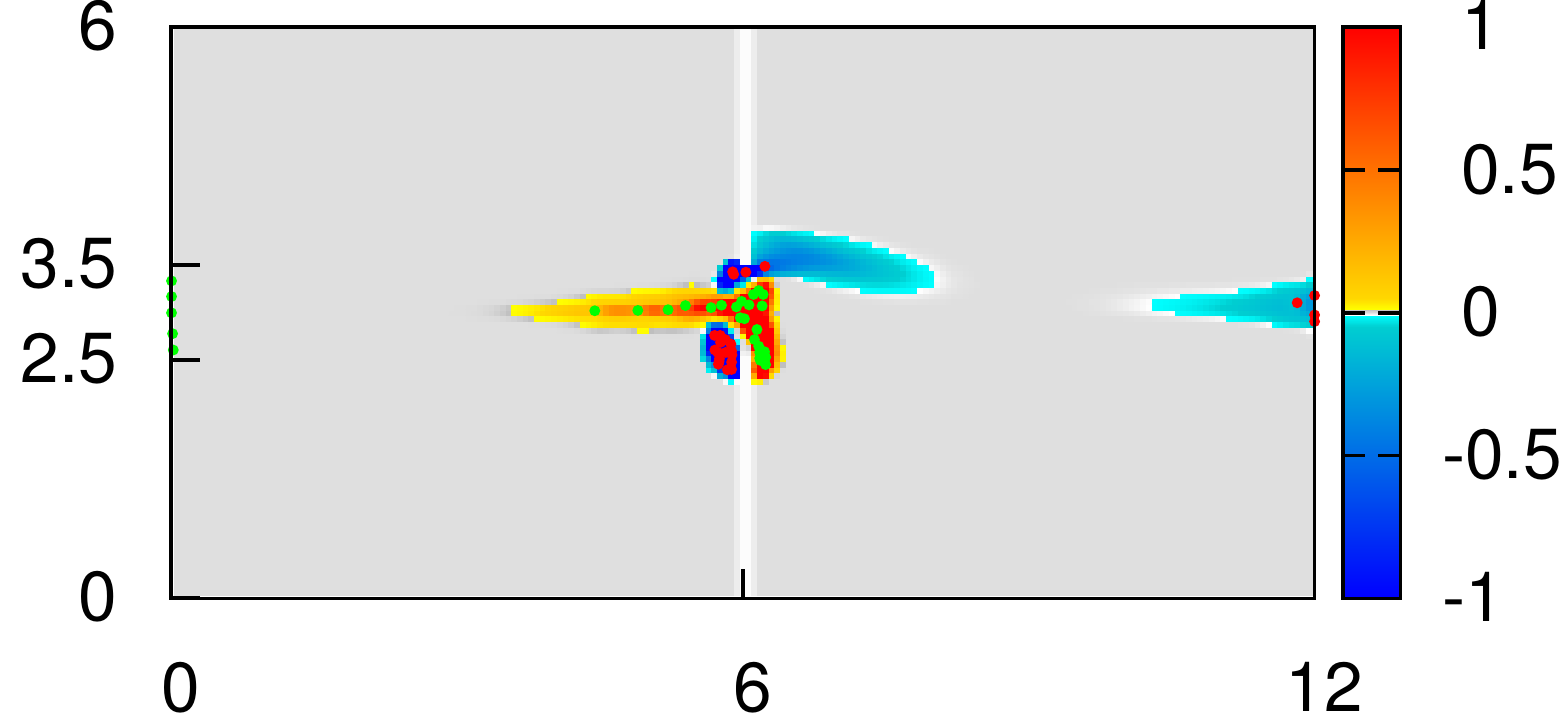} \\
	(d)
\end{minipage}
\begin{minipage}[c]{0.3\textwidth}
	\centering
	\includegraphics[width=\textwidth]{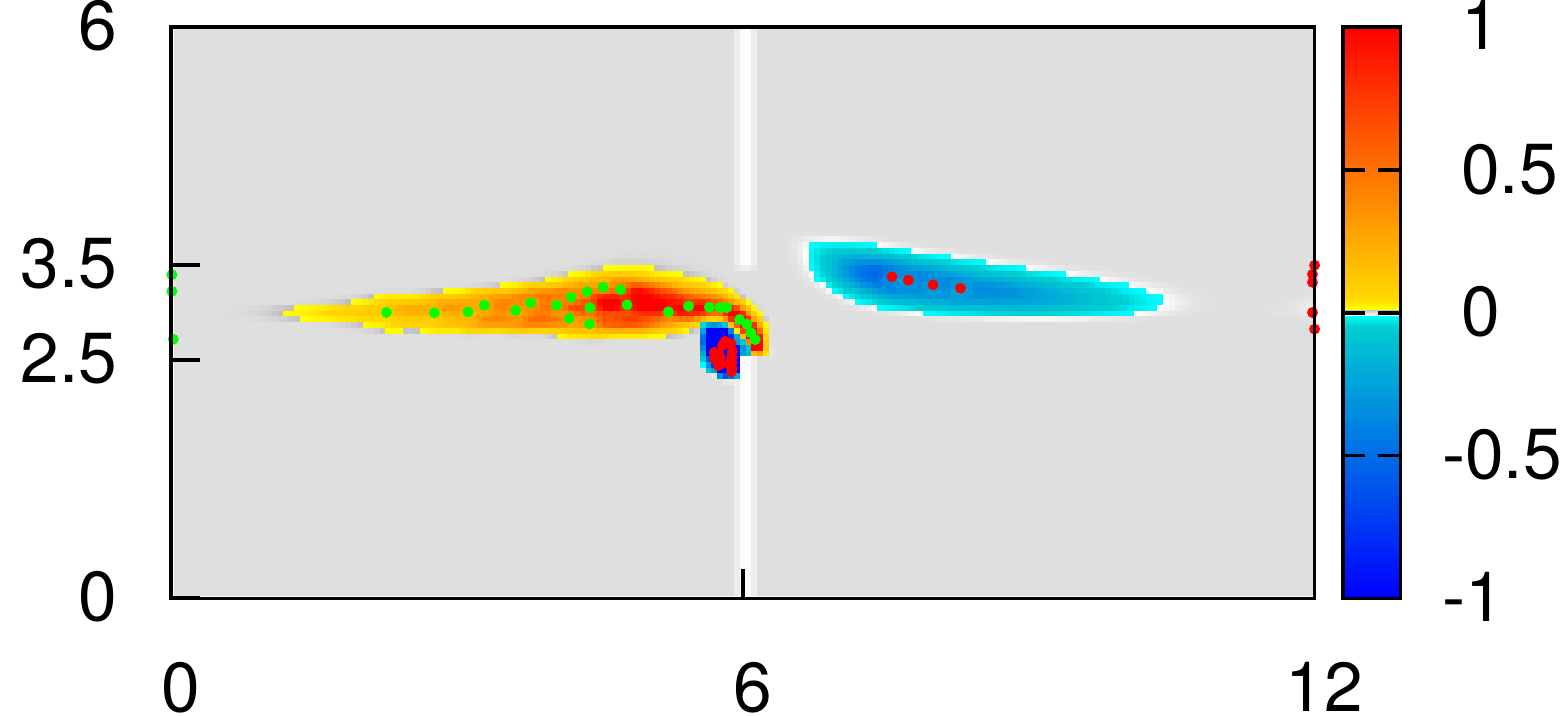} \\
	(e)
\end{minipage}
\begin{minipage}[c]{0.3\textwidth}
	\centering
	\includegraphics[width=\textwidth]{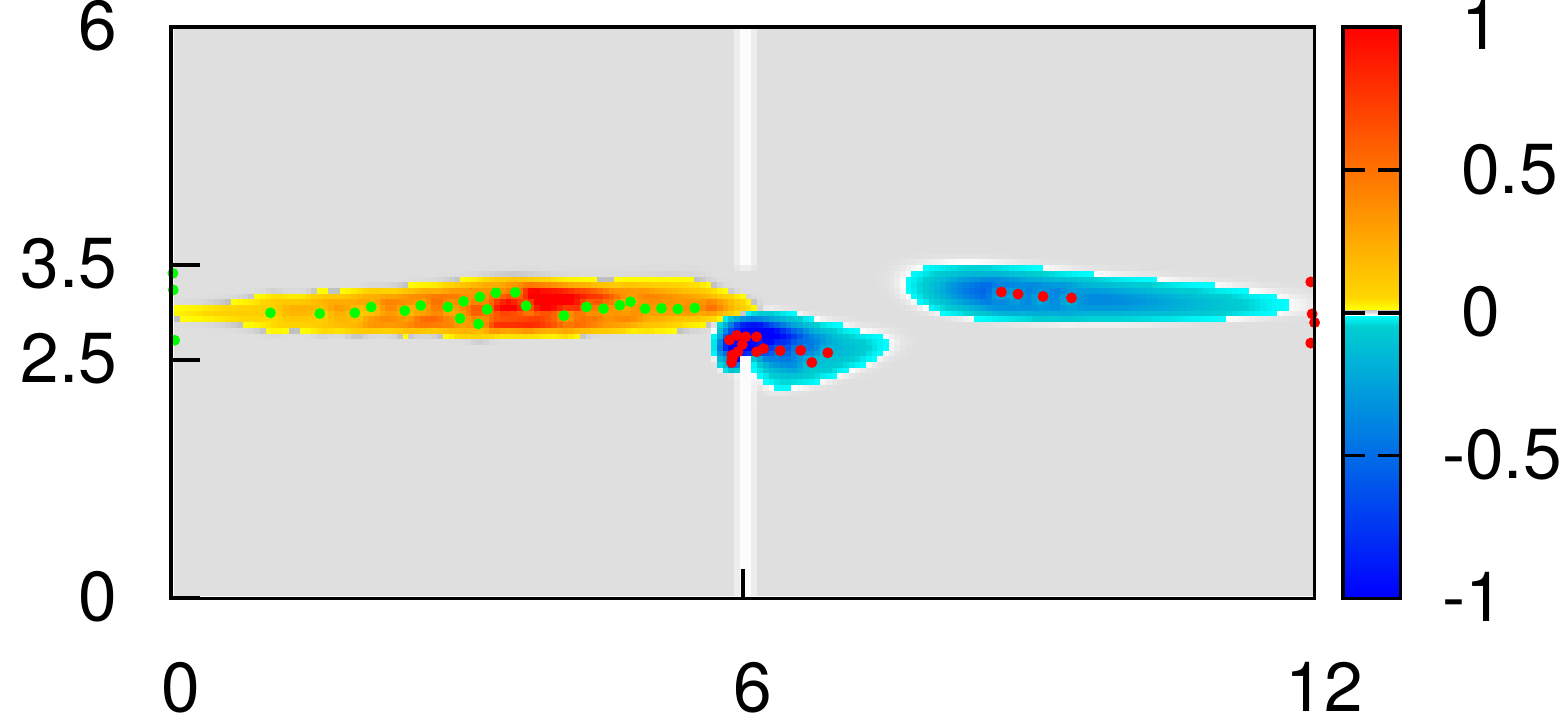} \\
	(f)
\end{minipage}
\caption{Test 3. Alternate flows at the bottleneck in the multiscale model ($\theta=0.3$). Negative values of the density of the population walking rightward are for pictorial purposes only}
\label{fig.test3.theta03.snapshots}
\end{center}
\end{figure}
\begin{figure}[!t]
\begin{center}
\begin{minipage}[c]{0.3\textwidth}
	\centering
	\includegraphics[width=\textwidth]{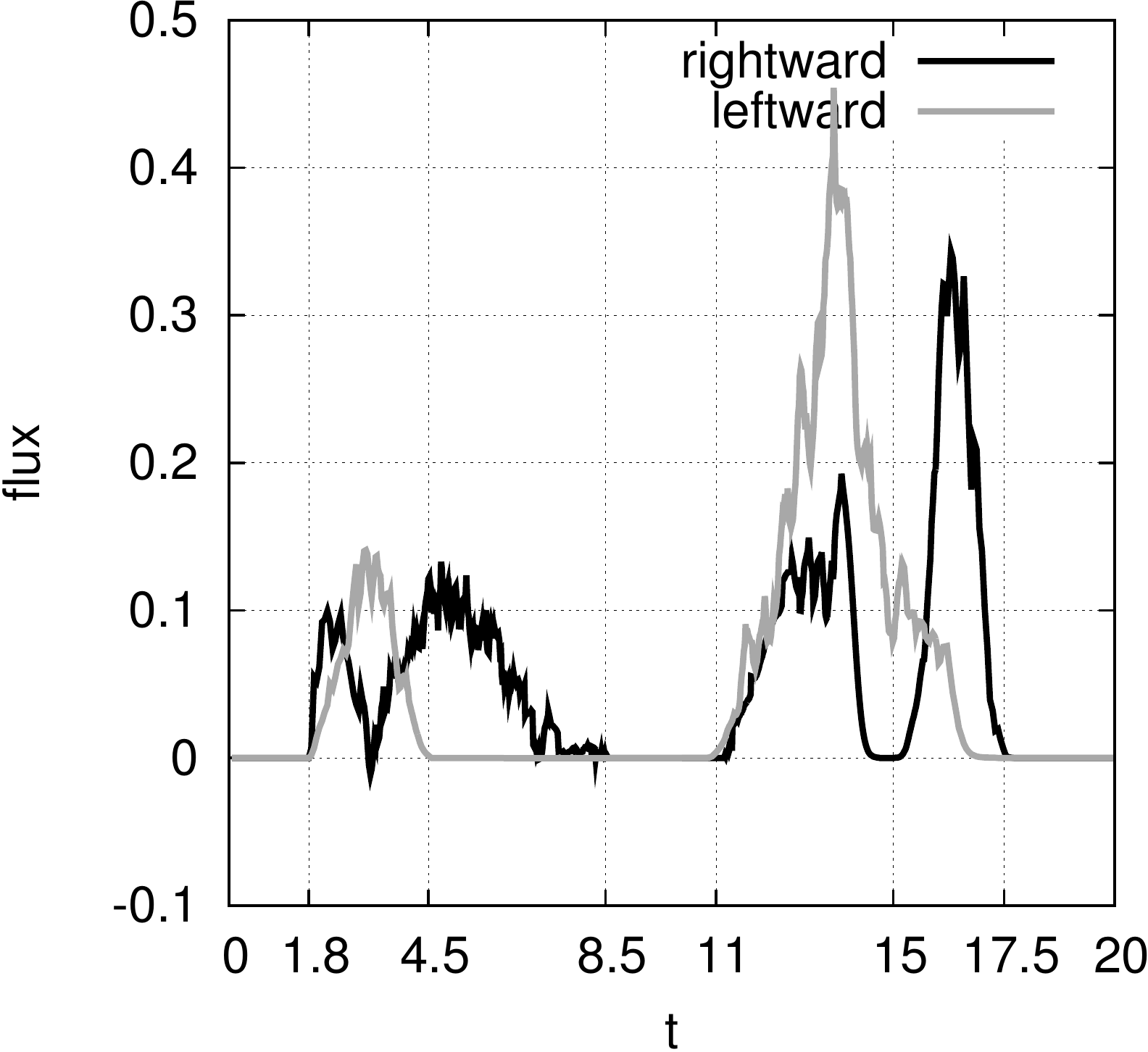} \\
	(a)
\end{minipage}
\hspace{0.3cm}
\begin{minipage}[c]{0.3\textwidth}
	\centering
	\includegraphics[width=0.95\textwidth]{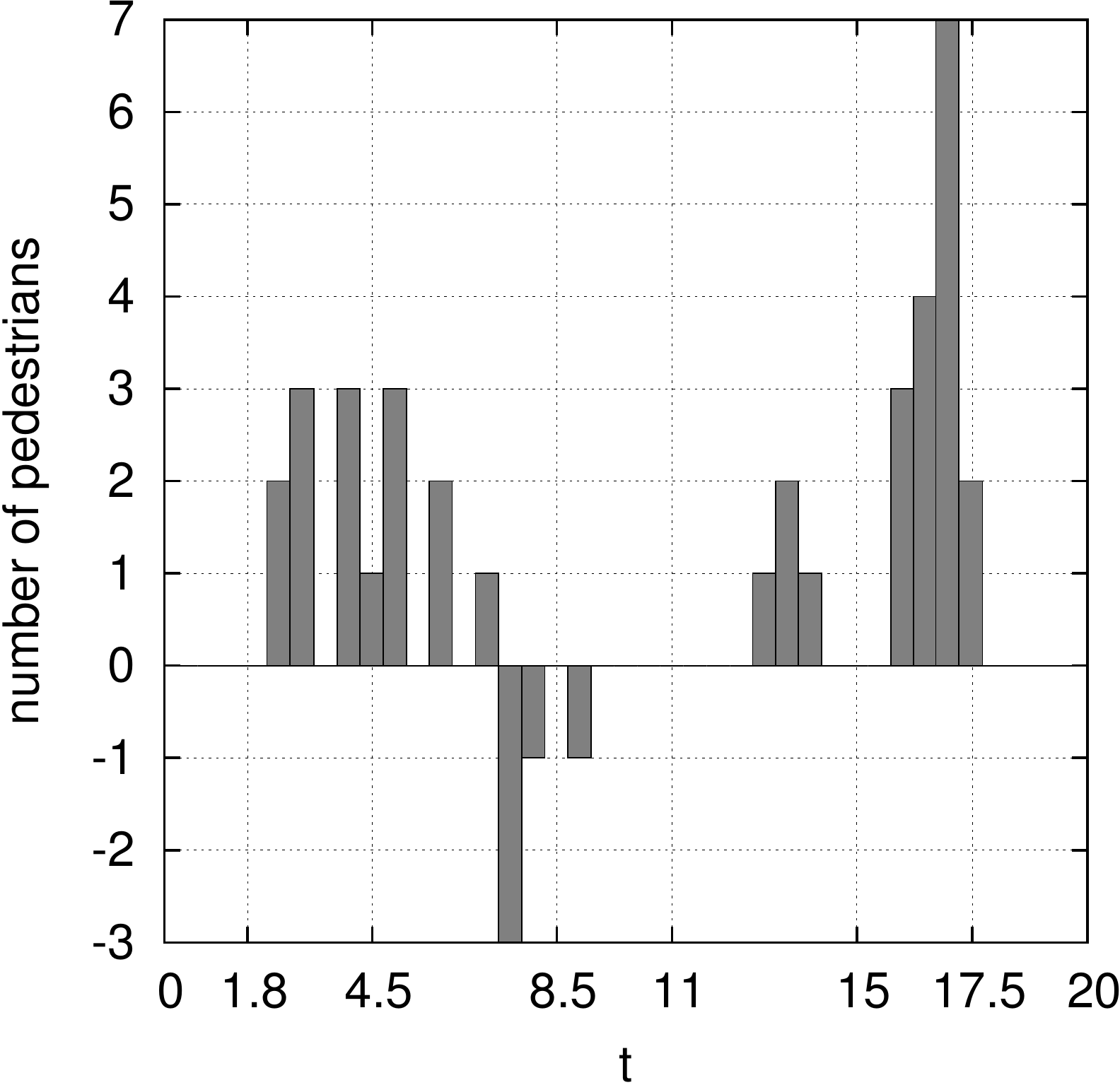} \\
	(b)
\end{minipage}
\hspace{0.3cm}
\begin{minipage}[c]{0.3\textwidth}
	\centering
	\includegraphics[width=0.95\textwidth]{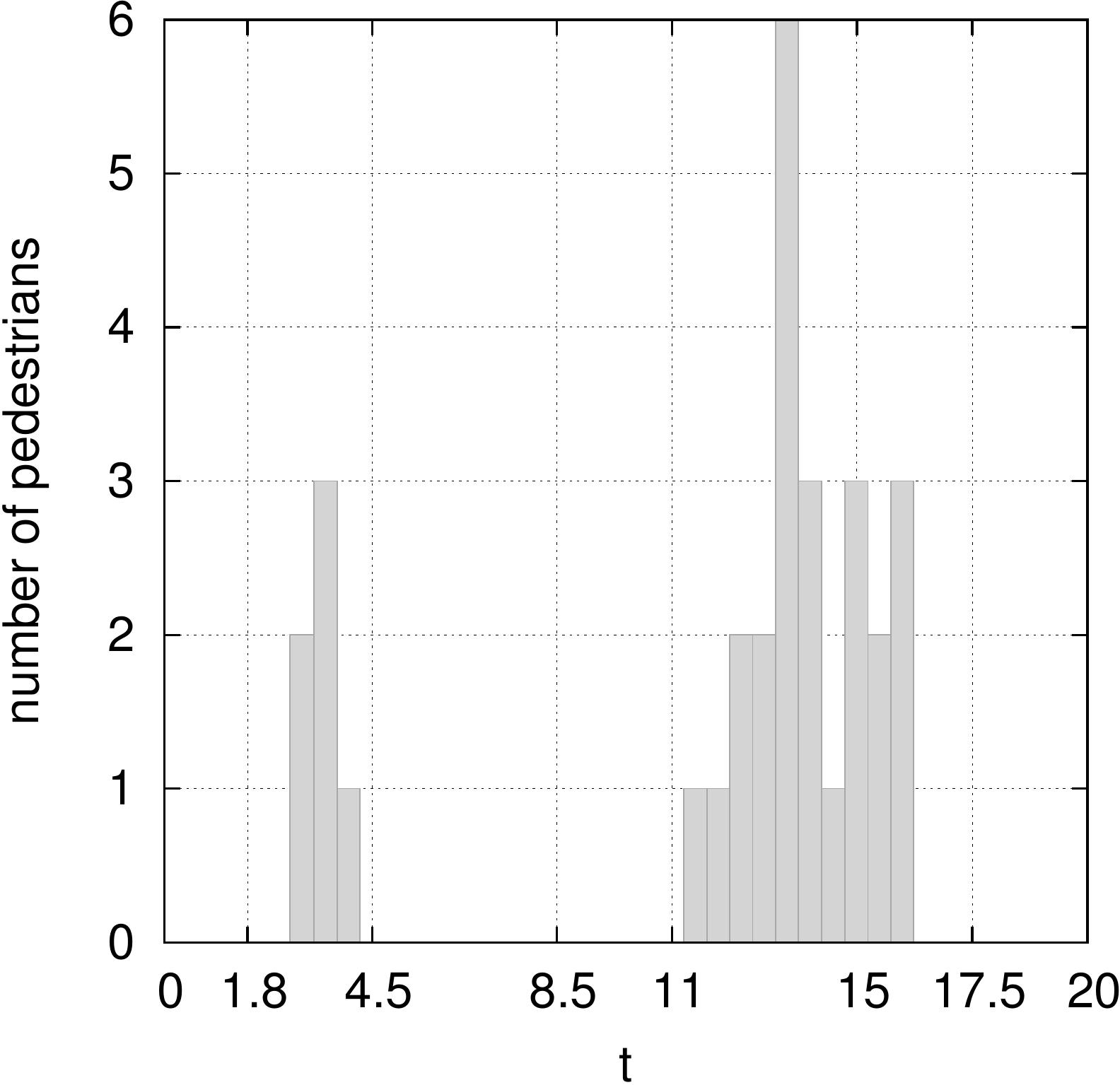} \\
	(c)
\end{minipage}
\caption{Test 3. Multiscale model ($\theta=0.3$): (a) macroscopic fluxes across the bottleneck, (b)-(c) number of microscopic pedestrians crossing the bottleneck rightward and leftward, respectively}
\label{fig.test3.theta03.flux}
\end{center}
\end{figure}

\begin{figure}
\begin{center}
\begin{minipage}[b]{0.6\textwidth}
	\centering
	\includegraphics[width=0.8\textwidth]{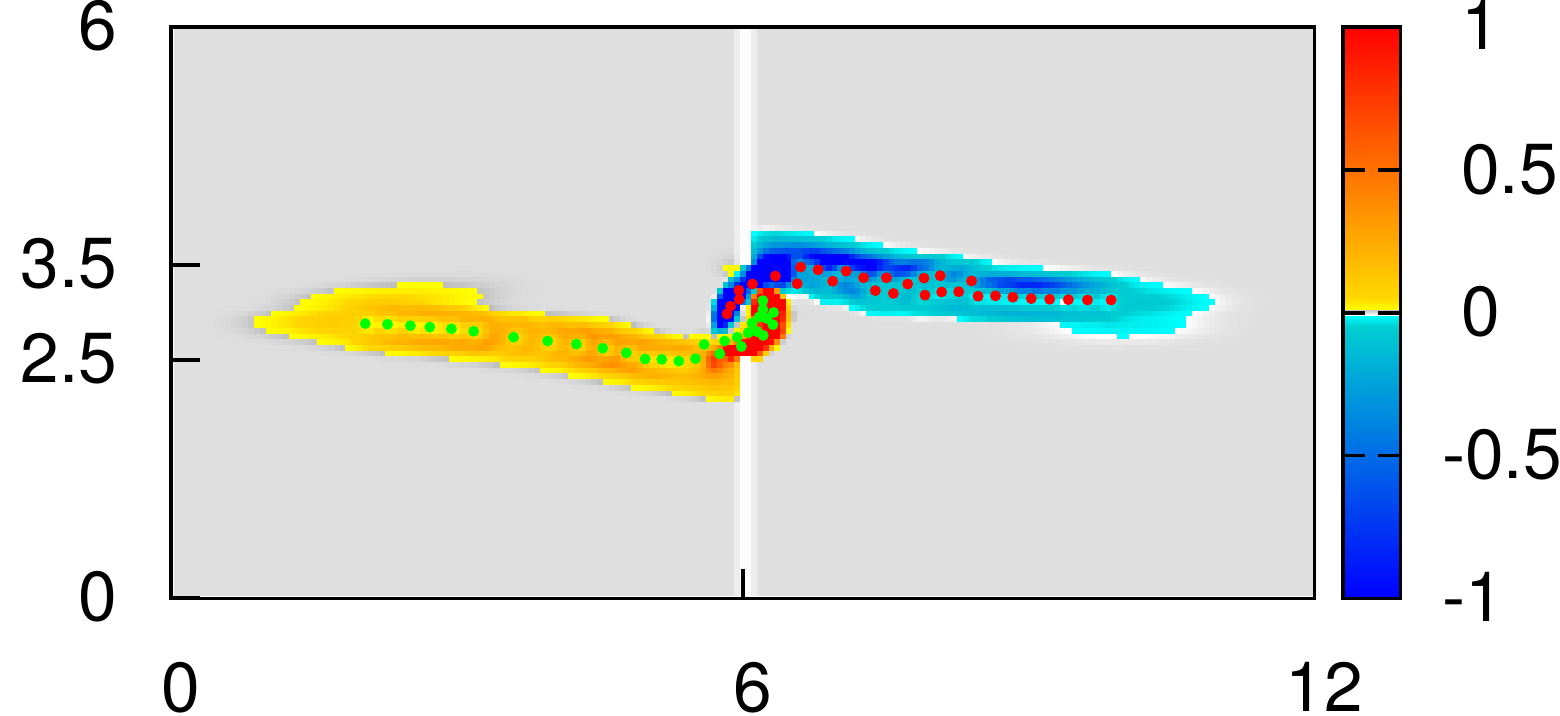} \\
	(a)
\end{minipage}
\begin{minipage}[b]{0.3\textwidth}
	\centering
	\includegraphics[width=\textwidth]{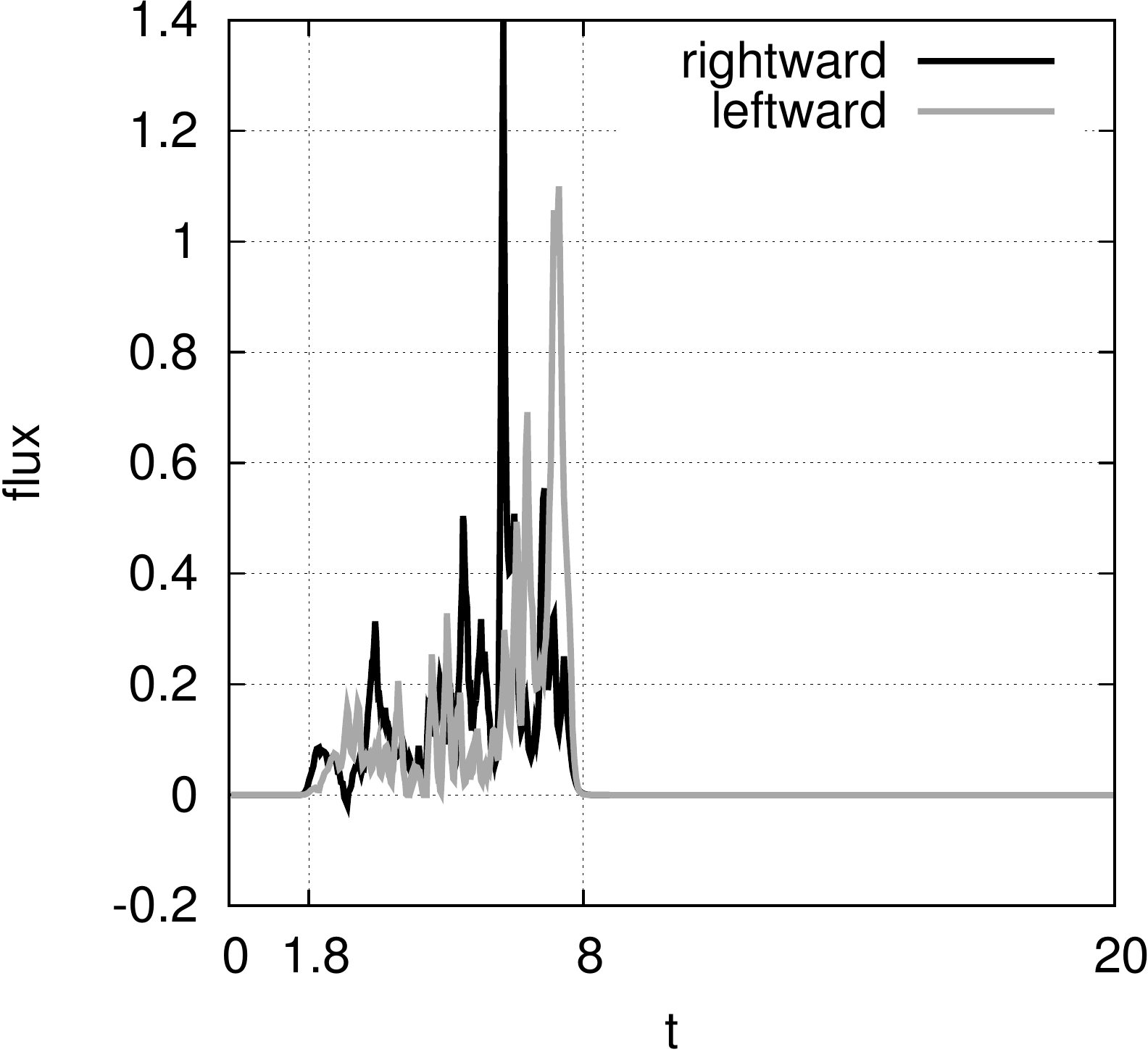} \\
	(b)
\end{minipage}
\caption{Test 3. Alternate lanes through the bottleneck (left) and corresponding macroscopic fluxes (right) emerging with the fully microscopic dynamics. Notice that both fluxes are identically zero for $t\geq 8$ because by then both populations have completely flowed across the door}
\label{fig.test3.theta1}
\end{center}
\end{figure}

Let us begin from the case $\theta=0$, with the macroscopic scale leading the dynamics. The bottleneck tends to clog (Fig. \ref{fig.test3.theta0}a): no density nor microscopic pedestrians flow through, except for a small mass passing initially when the passage is still free. This is well confirmed by the time trend of the macroscopic flux across the bottleneck (Fig. \ref{fig.test3.theta0}b): that of population $2$ is permanently zero for $t\geq 4.5$, whereas that of population $1$ oscillates between small positive and negative values, which implies that population $1$ is pushed backward by population $2$ as soon as it tries to cross. 

By increasing $\theta$ to an intermediate value between $0$ and $1$, a multiscale coupling is realized. For $\theta=0.3$, the resulting dynamics is depicted in Fig. \ref{fig.test3.theta03.snapshots} and summarized in Fig. \ref{fig.test3.theta03.flux} by the time trend of the macroscopic and microscopic fluxes across the bottleneck. The model reproduces now the oscillations of the passing direction at the bottleneck described e.g., in \cite{helbing2001trs,helbing1995sfm,helbing2001sop}. In more detail, starting from the initial condition depicted in Fig. \ref{fig.test3.theta03.snapshots}a, pedestrians of population $2$ are induced to stop at the bottleneck while those of population $1$ go through at one side (Fig. \ref{fig.test3.theta03.snapshots}b, Fig. \ref{fig.test3.theta03.flux} for $4.5\leq t\leq 8.5$). After some time, population $2$ reorganizes and stops the flow of population $1$ (Fig. \ref{fig.test3.theta03.snapshots}c, Fig. \ref{fig.test3.theta03.flux} for $8.5\leq t\leq 11$), then its larger mass stuck at the bottleneck gives it locally the necessary strength for repelling opposite walkers and gaining room in the middle (Fig. \ref{fig.test3.theta03.snapshots}d, Fig. \ref{fig.test3.theta03.flux} for $11\leq t\leq 15$). Some walkers of population $1$ remain trapped by the stream of population $2$ and cannot access the passage (Fig. \ref{fig.test3.theta03.snapshots}e, Fig. \ref{fig.test3.theta03.flux} for $t\approx 15$) until  most of population $2$ has flowed through (Fig. \ref{fig.test3.theta03.snapshots}f, Fig. \ref{fig.test3.theta03.flux} for $15\leq t\leq 17.5$).

From the modeling point of view, the difference with the case $\theta=0$ is that the multiscale coupling shifts some macroscopic mass ($30\%$ in this example) onto the microscopic pedestrians, all other parameters and initial conditions being unchanged. The inhomogeneous distribution of this microscopic mass induces a break of symmetry between the interfacing populations, which finally leads to an alternate occupancy of the passage according to the repulsion prevailing locally in space and time.

Setting $\theta=1$, which amounts to shifting the whole mass onto the microscopic pedestrians and having the microscopic scale lead the dynamics, produces instead the outcome displayed in Fig. \ref{fig.test3.theta1}. Now the microscopic granularity fully dominates, hence the stream is the most fluent one (cf. also Test 2). As a result, the bottleneck interferes less with the stream than in the previous cases, and the model reproduces the alternate oppositely walking lanes (Fig. \ref{fig.test3.theta1}a) extensively observed as one of the main effects of self-organization in real crowds (cf. \cite{helbing2001sop,hoogendoorn2005sop} and references therein). The time trend of the macroscopic flux across the bottleneck (Fig. \ref{fig.test3.theta1}b) confirms that the two populations flow simultaneously through the passage, with comparable fluxes, in the interval $1.8\leq t\leq 8$. After the time $t=8$ the macroscopic fluxes are identically zero because by then both populations have completely flowed across the door.

\begin{figure}[!t]
\begin{center}
\begin{minipage}[c]{0.47\textwidth}
	\centering
	\includegraphics[width=\textwidth]{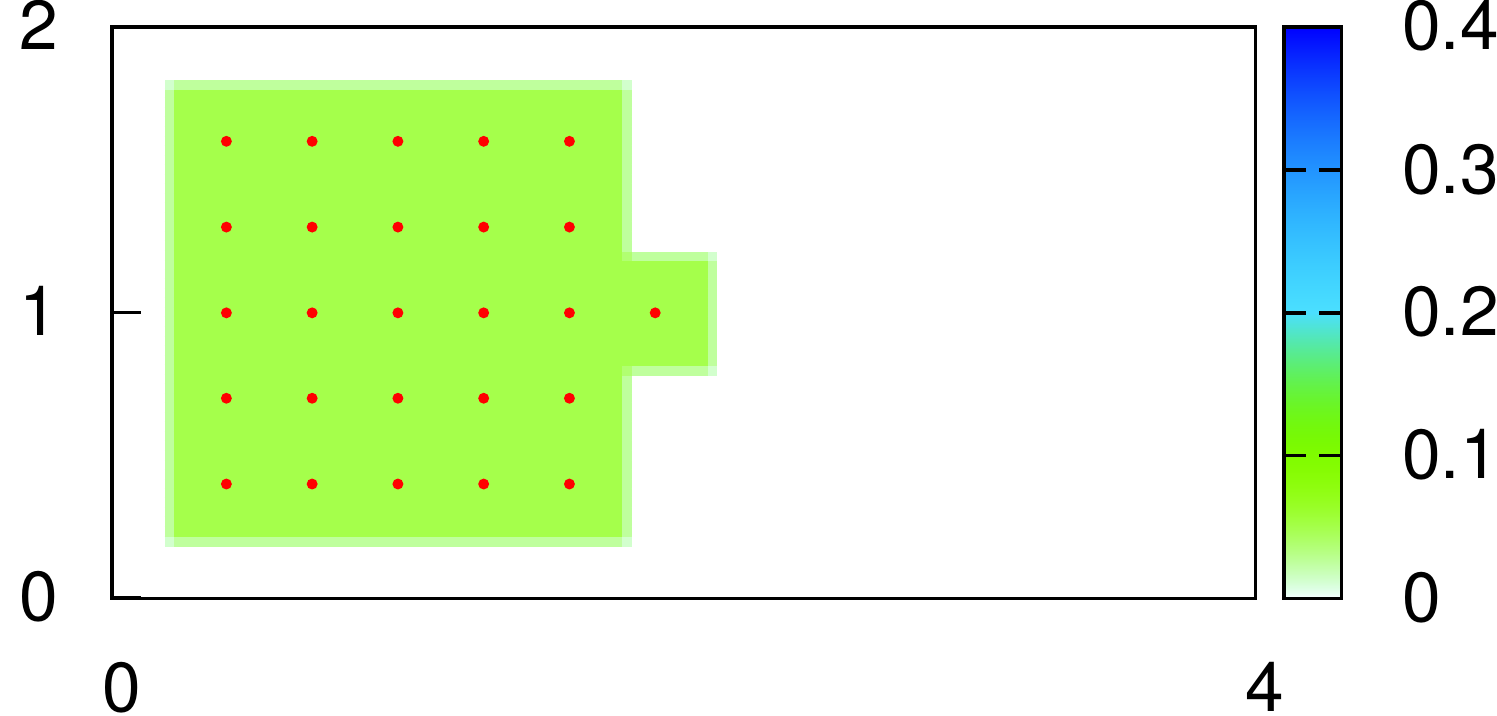} \\
	(a)
\end{minipage}
\hspace{0.5cm}
\begin{minipage}[c]{0.47\textwidth}
	\centering
	\includegraphics[width=\textwidth]{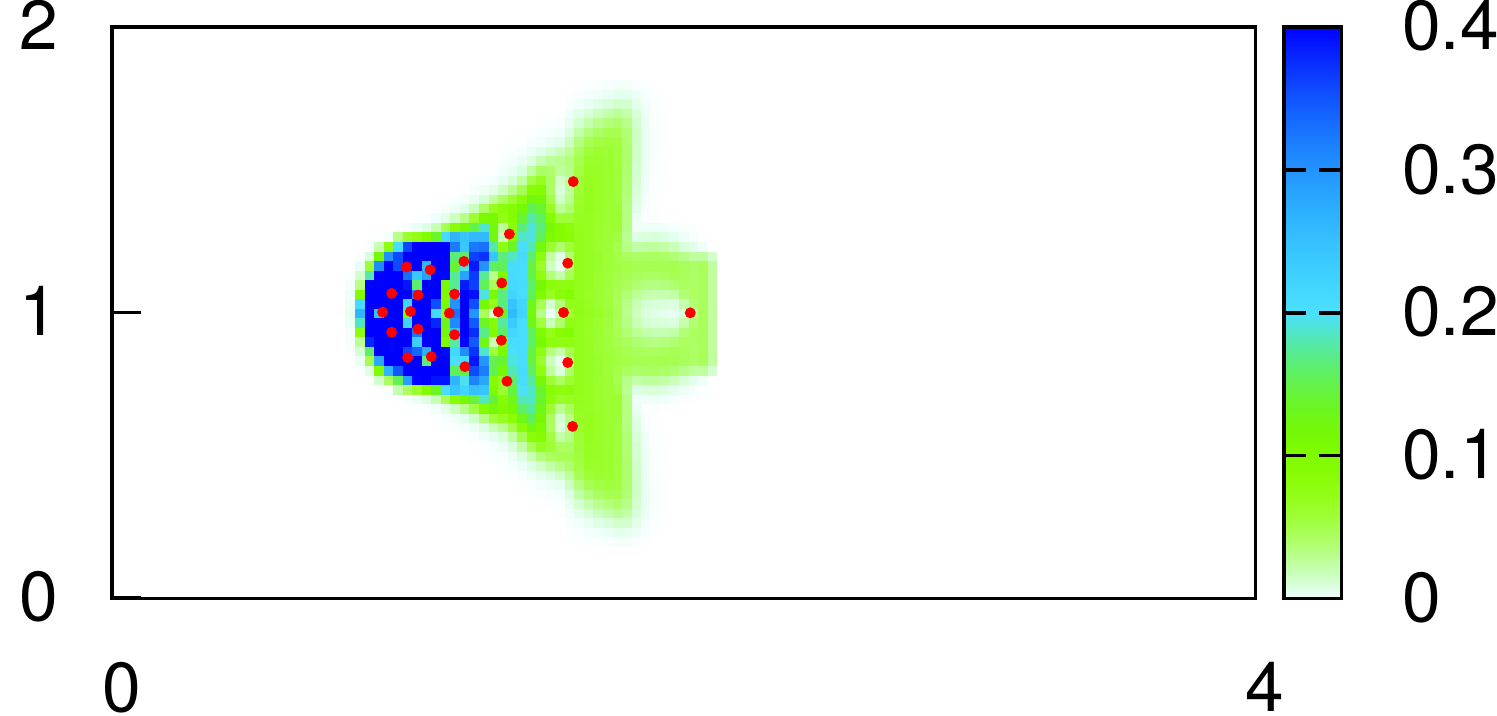} \\
	(b)
\end{minipage} \\[0.3cm]
\begin{minipage}[c]{0.47\textwidth}
	\centering
	\includegraphics[width=\textwidth]{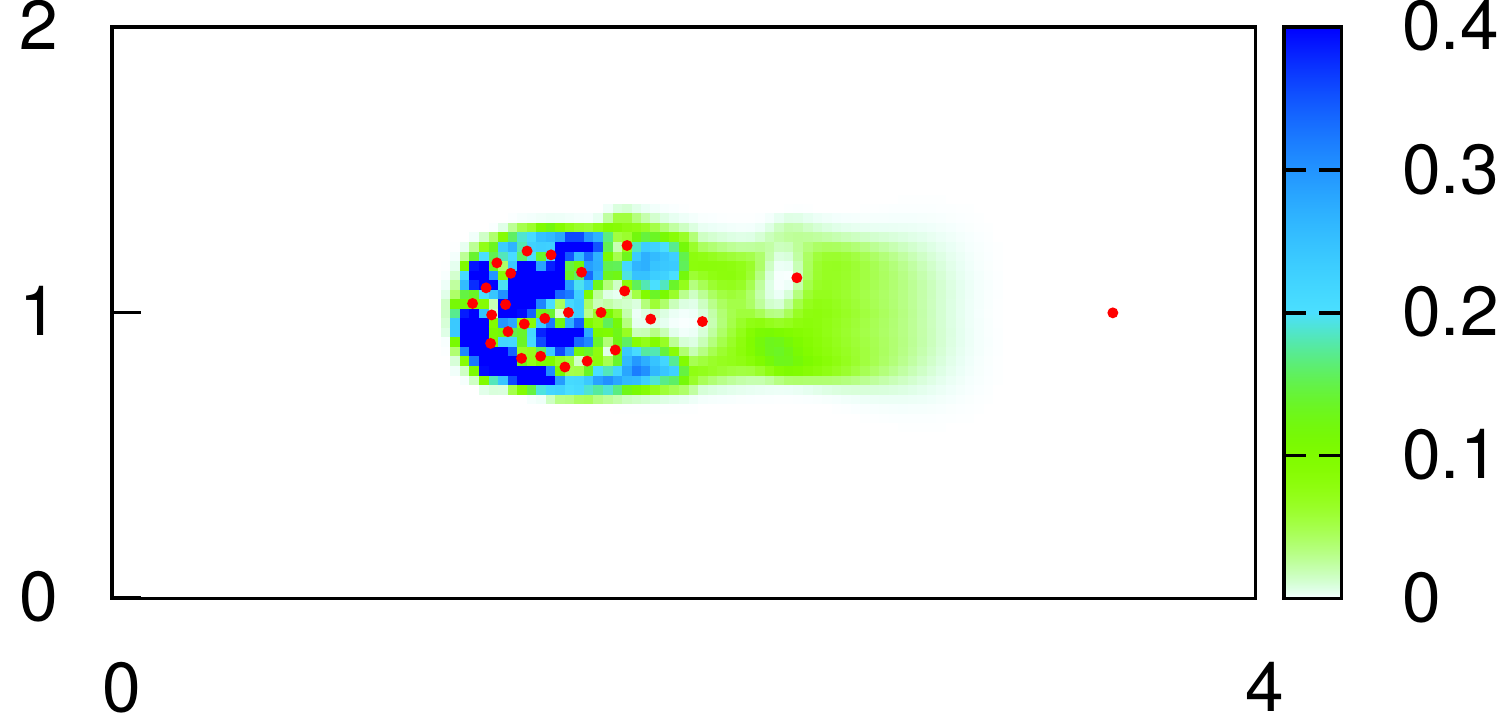} \\
	(c)
\end{minipage}
\hspace{0.5cm}
\begin{minipage}[c]{0.47\textwidth}
	\centering
	\includegraphics[width=\textwidth]{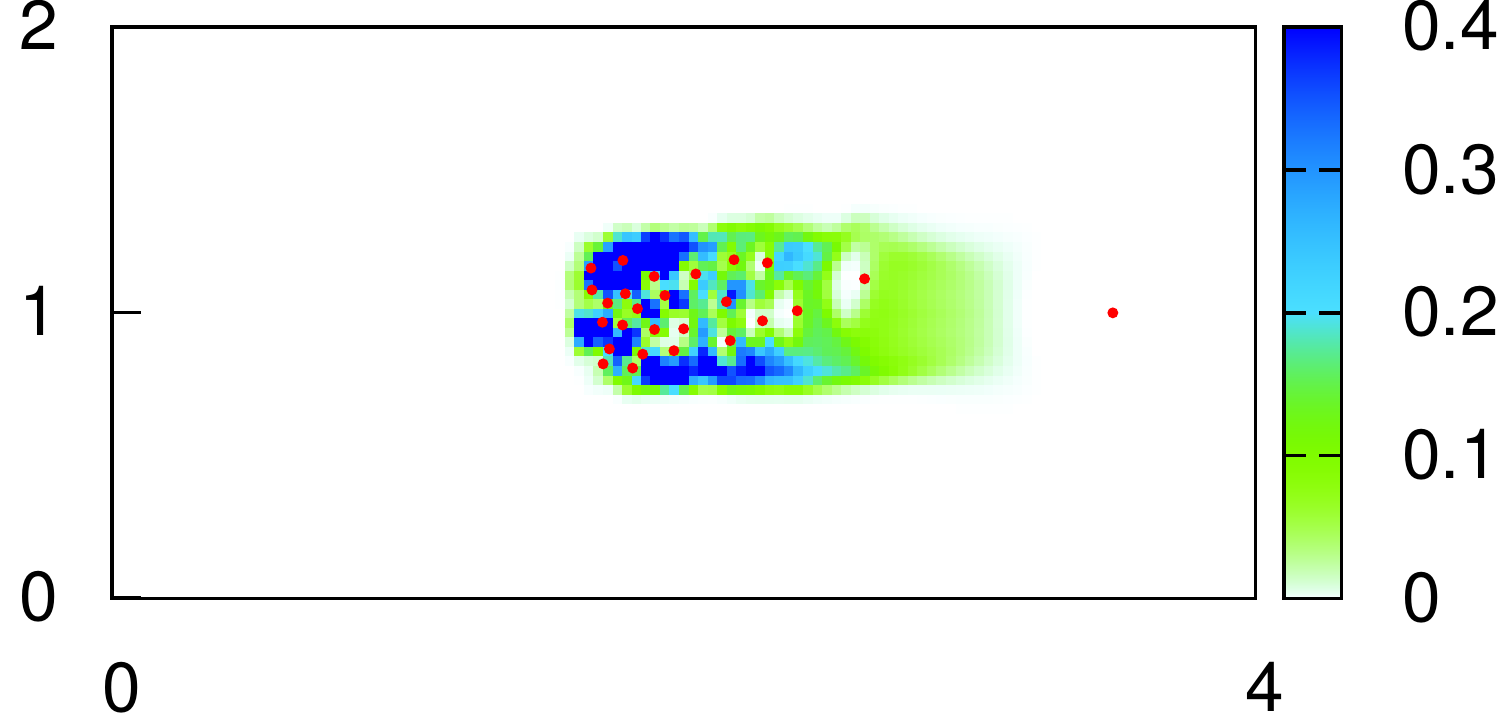} \\
	(d)
\end{minipage}
\caption{Test 4. (a) initial condition, (b) the group assumes an elongated configuration, (c) the group is formed and follows the leader, (d) the leader waits for the group while the group moves on}
\label{fig.test4.snapshots}
\end{center}
\end{figure}

\subsection*{Test 4: Macroscopic effect of a microscopic leader}
In this last test we outline a capability of the multiscale model, not yet highlighted so far, which will surely deserve further investigation. More precisely, we are referring to the use of the microscopic scale for modeling some features of the system which could not be described in a purely macroscopic framework, but which nonetheless affect the macroscopic dynamics. This is essentially different from the previous tests, where the same effects, such as repulsion and obstacle avoidance, were described at both scales. The main novelty here is that the system includes a microscopic term \emph{with no macroscopic counterpart}. 

We consider the case of a crowd following a leader, for instance a group of tourists and their guide. The leader is a microscopic pedestrian who behaves in a different way with respect to all of the other group members: s/he is the only one \emph{informed} of the way to go, hence s/he walks with a pre-assigned velocity ($0.4\mathbf{i}$ in this example) independently of the others (\ie, s/he does not interact with the rest of the group). S/he only stops when her/his distance from the group becomes too large. The followers have zero desired velocity, because they are not informed of the way to go, and experience both frontal attraction and frontal repulsion with their group mates, including the leader. (Like in Test 1, the angle $\alpha_{xy}$ in Eq. \eqref{eq.nu.tdiscr} is computed by assuming conventionally $v_\des=\mathbf{i}$). Attraction acts against group dispersion, and is needed especially in order for the crowd to follow the leader. Instead, repulsion is intended for collision avoidance among group mates. The radii $R_a$, $R_r$ are equal (cf. Table \ref{tab.parametri}), in particular $R_a$ is so small that the tail of group does not feel the leader ahead.

The group starts from the square-shaped distribution depicted in Fig. \ref{fig.test4.snapshots}a, with the leader in front. Then, after a transient (Fig. \ref{fig.test4.snapshots}b), it assumes a horizontally elongated shape (Fig. \ref{fig.test4.snapshots}c) as a result of joint attractive and repulsive effects. With no leader such a configuration would be an equilibrium, as attraction and repulsion balance. However, as soon as the leader starts moving forward undisturbed, pedestrians at the front, who can feel him directly, are attracted and move forward in turn. At the same time, pedestrians at the rear are attracted toward group mates in front. This makes the information on the way to go travel backward across the group, which ultimately moves forward as a whole.

It is worth stressing again that at the macroscopic scale there is no counterpart of the microscopic leader. This implies that the macroscopic interaction velocity $\nu[M_t]$ is not affected by the microscopic leader, therefore the macroscopic mass feels the latter only through the microscopic interaction velocity $\nu[m_t]$.

This test shows that our multiscale framework is suitable to reproduce a well known feature of self-organizing groups, namely the fact that a small number of informed agents can move the whole group in the desired direction \cite{couzin2005eld}. In particular, thanks to the multiscale coupling, this effect is appreciable also at the macroscopic scale, which would not be suitable by itself to model differences among the individuals.

\section{Conclusions and future research}
\label{sect.conclusions}
In this paper we have presented a measure-based multiscale method for modeling pedestrian flow. We point out that neither a purely microscopic ODE-based approach nor a purely macroscopic PDE-based approach is new in the literature for this kind of application. The novelty here is the way of coupling the two scales in a rigorous mathematical framework. This is possible thanks to the measure-theoretic approach, which makes no \emph{a priori} distinction between the scales, and to the fact that, by a proper scaling, the microscopic and the macroscopic model can reproduce the flow of the same mass of pedestrians with comparable outcomes (cf. the numerical test 1). We stress that introducing microscopic heterogeneity in a macroscopic model is not straightforward. Adding random disturbances to the macroscopic variables may lead to apparently good results but it cannot be mathematically nor physically justified in an averaged context. Instead, our method allows to add granularity to the macroscopic flow and to preserve at the same time physical meaning and mathematical rigor.

From the modeling side, it is worth noticing that a macroscopic model of pedestrian flow is useful to get overall distributed information, especially in connection with design, control, and optimization issues. However, as demonstrated by our numerical simulations, a certain amount of granularity is often crucial to catch some aspects of self-organization in crowds triggered by the microscopic inhomogeneities of the flow (cf. the numerical test 3). Of course, it has to be expected that the outcome at whatever scale partly depends on the tuning of the parameters of the model. Therefore, it is not our purpose to state that the multiscale approach is always better (\ie, more realistic) than either the microscopic or the macroscopic approach by itself. Rather we believe that the proposed technique offers a convenient way to make the two scales interact and jointly contribute to the final result.

The present form of our multiscale approach is mainly concerned with the same mass of pedestrians modeled at both the microscopic and the macroscopic scale. The multiscale coupling is then realized by scale interpolation. However, the numerical test 4 demonstrates that the framework is suitable also to model features at either scale, which have no explicit counterpart at the other scale and nonetheless affect crucially the overall dynamics. As a research development, we plan to further generalize our multiscale approach in this direction, having in mind specific applications related to traffic flow.

Pedestrians and cars share some relevant features, such as a desired velocity driving them toward specific destinations and a frontally restricted visual field. Actually car movements are much more constrained than pedestrians', hence self-organization is more limited, however not completely inhibited. For example, an application quite considered in the technical literature \cite{lan2005ica,lee2009nam} concerns mixed traffic conditions with few mopeds within a flow of cars.

\section*{Acknowledgments}
A. Tosin was funded by a post-doctoral research scholarship ``Compa\-gnia di San Paolo'' awarded by the National Institute for Advanced Mathematics ``F. Severi'' (INdAM, Italy).

\bibliographystyle{plain}
\bibliography{CePbTa-micromacro}

\end{document}